\documentclass[letterpaper,10pt,conference]{ieeeconf}
\IEEEoverridecommandlockouts
\let\labelindent\relax
\usepackage{enumitem,graphicx,xspace}
\usepackage{subfig}

\usepackage{amsmath,amssymb, amsfonts,cases,mathtools} 
\usepackage{dsfont}
\newcommand{\pde}[2]{ \frac{\partial #1}{\partial #2} }
\newcommand{\ppde}[2]{\frac{\partial^2 #1}{\partial #2^2}}

\newcommand{\R}{\mathbb{R}}

\newcommand{\1}{\mathds{1}}

\newcommand{\diag}[1]{\mathrm{diag}\left(#1\right)}

\newcommand{\card}[1]{\mathrm{card}(#1)}

\newcommand{\abs}[1]{ \left\lvert #1 \right\rvert }

\newcommand{\G}{\mathcal{G}}
\newcommand{\edgeSet}{\mathcal{E}}
\newcommand{\nodeSet}{\mathcal{V}}

\newcommand{\vspan}[1]{\mathrm{span}(#1)}
\newcommand{\U}{\mathcal{U}}

\newtheorem{theorem}{Theorem}
\newtheorem{corollary}{Corollary}
\newtheorem{conjecture}{Conjecture}
\newtheorem{lemma}{Lemma}
\newtheorem{assumption}{Assumption}
\newtheorem{remark}{Remark}

\usepackage[colorlinks=true,citecolor=blue,linkcolor=blue,urlcolor=blue,]{hyperref}

\usepackage{comment}

\newcommand{\xeq}{x^\ast}
\newcommand{\uast}{u^\ast}

\newcommand{\xtarget}{x_\mathrm{d}}

\newcommand{\opt}[1]{#1^{\text{opt}}}

\usepackage{pgfplots}
\pgfplotsset{compat=1.18}
\usetikzlibrary{pgfplots.fillbetween}
\usepackage{kthcolors}

\allowdisplaybreaks

\title{\LARGE\bf On personal recommendations in social networks}

\author{
Philipp Grünter$^{\ast}$\quad
Karl Henrik Johansson\quad
Angela Fontan
\thanks{
P. Grünter, K. H. Johansson, and A. Fontan are with the Department of Decision and Control Systems, KTH Royal Institute of Technology, Stockholm, Sweden. 
E-mail: \{grunter, kallej, angfon\}@kth.se. 
The authors are also affiliated with Digital Futures and WASP. $^\ast$Corresponding author.
}
\thanks{This work was partially supported by the Wallenberg AI, Autonomous Systems and
Software Program (WASP) funded by the Knut and Alice Wallenberg Foundation.}
}

\begin{document}

\maketitle
\begin{abstract}
Social networks in which algorithms actively influence humans through personal recommendations are ubiquitous. While opinion dynamics is an established tool to analyze these systems, existing models typically do not capture how individual agents process personal recommendations. In this work, we introduce a model for personal recommendations that is analytically tractable and consistent with the confirmation bias phenomenon from behavioral psychology. We describe how individuals process recommendations based on prior beliefs and a sensitivity parameter using a Gaussian influence function. Using this model, we analyze the effect of different recommendation policies. For broadcast policies, where recommendations are homogeneous across the agents, a bifurcation analysis shows that the system exhibits bistability, which may result in unintended consequences. For personally targeted policies, we first derive optimal personal recommendations and then extend to robust convergence when the agents' sensitivity to personal recommendations is uncertain. The collective behavior of the proposed model and the derived policies are illustrated and evaluated through numerical simulations. 
\end{abstract}

\section{Introduction}\label{sec:intro}

Personal recommendations are to a large extent generated by algorithms that automatically curate suggestions for individual users. These automated recommendations increasingly influence and shape human communities in domains such as transportation \cite{annaswamyEquitableMicrotransitServices2024}, energy systems \cite{ratliffIncentiveDesignUtility2014}, and social networks~\cite{lanzettiImpactRecommendationSystems2023}. One of the central challenges in these systems is modeling how individuals respond to such influence and understanding how a centralized target impacts collective behavior. In this context, personalized recommendations have become particularly prevalent, ranging from content algorithms shaping political discourse \cite{muscoMinimizingPolarizationDisagreement2018} to personalized dietary suggestions \cite{fontanInfluencingOpinionDynamics2024}. Yet individuals do not passively absorb recommendations but rather process them based on their existing beliefs and preferences. 

Opinion dynamics is an established agent-based framework for modeling how beliefs evolve over social networks, and for studying collective social phenomena emerging from these interactions (for comprehensive tutorials and reviews, see, e.g., \cite{proskurnikovTutorialModelingAnalysis2017,bernardoBoundedConfidenceOpinion2024,tianHowSocialInfluence2023,zinoSocialDiffusionDynamics2023}). 
Recent applications range from food systems \cite{fontanSystemsPerspectivePromoting2025}, politics \cite{fontanSignedNetworkPerspective2021}, and healthcare \cite{anconaModelbasedOpinionDynamics2022}, to the adoption of technologies \cite{aluttoModelPredictiveControl2025}, repeated decision-making \cite{ravazziOptimalPolicyDesign2026}, and recommender systems \cite{lanzettiImpactRecommendationSystems2023}.
Extending the opinion dynamics framework, previous works taking a control perspective on recommendations in social networks use different approaches to model these interactions. Recommendations have been modeled, for instance, as an additional strategic agent \cite{sprengerControlStrategiesRecommendation2024}, by modifying the edges of the underlying graph topology \cite{chenCoevolutionOpinionDynamics2026}, as a committed minority of stubborn agents \cite{raineriControllingSocialNetwork2025}, or through agents' interaction with a recommendation system via clicks \cite{lanzettiImpactRecommendationSystems2023,rossiClosedLoopOpinion2022,chandrasekaranNetworkAwareRecommenderSystem2026}. However, these approaches typically do not model \emph{how} individuals psychologically process the recommendations they receive. This aspect is important to consider, as it affects both the analysis and design of algorithms acting on social networks. In this work, we address this challenge by modeling and analyzing the effect of personal recommendations on opinions in social networks.
The proposed model draws on well-established phenomena from behavioral psychology, in particular \emph{confirmation bias} \cite{nickersonConfirmationBiasUbiquitous1998}, which describes the tendency of individuals to favor opinions close to their own and to actively seek information that supports their existing beliefs. Bounded-confidence models, such as the Hegselmann--Krause model~\cite{rainerOpinionDynamicsBounded2002} (see also \cite{bernardoBoundedConfidenceOpinion2024} for a review), capture this effect by having agents disregard opinions of their neighbors beyond a certain threshold.  
However, in the original formulation by Hegselmann and Krause \cite{rainerOpinionDynamicsBounded2002}, as well as in subsequent extensions, this mechanism is implemented via a discontinuous interaction rule, whose psychological justification has been questioned \cite{masDifferentiationDistancingExplaining2013,kurahashi-nakamuraRobustClusteringGeneralized2016}. 
Instead, more realistic continuous variants of the bounded confidence function, leading to a smooth bounded confidence model, are introduced, for instance, in \cite{deffuantModellingGroupOpinion2004,deffuantHowOpinionsGet2025}. In particular, \cite{deffuantModellingGroupOpinion2004} proposes that the influence of another opinion is weighted by a Gaussian function with standard deviation equal to the uncertainty of the agent.

In this work, we propose a personal recommendation model that extends the DeGroot model of consensus \cite{degrootReachingConsensus1974} to describe the influence of personal recommendation with a Gaussian influence function (similar to the one introduced in \cite{deffuantModellingGroupOpinion2004}), a nonlinear function that is analytically tractable and consistent with the confirmation bias phenomenon described above. 
We analyze the proposed model under different recommendation policies and investigate the resulting collective behavior.
Our main contributions are as follows. First, we consider the baseline case of constant broadcast recommendations towards a target and investigate the qualitative behavior of the system relative to the recommendation value, captured in the model by a positive parameter. This parameter serves as a bifurcation parameter in the analysis; we show that, as the recommendation value increases, the system undergoes a fold bifurcation. For low recommendation values, the system admits a unique consensus equilibrium point.
When the recommendation value exceeds a critical threshold, two additional consensus equilibria emerge, with only the one farther from the recommendation target being locally asymptotically stable, suggesting that more aggressive broadcast recommendations can, in fact, be counterproductive. 
Second, for time-varying targeted recommendations, we show that a greedy policy is optimal in reaching the desired state in shortest time, and we establish convergence to the target opinion from arbitrary initial conditions. Finally, we extend these guarantees to the case where the agents' sensitivity is uncertain by designing a robust policy based on a worst-case estimate.

The remainder of the paper is organized as follows. Section~\ref{sec:model_pbl_formulation} states the model and the problem formulation. In Section~\ref{sec:broadcast} and Section~\ref{sec:greedy}, we investigate broadcast and targeted policies, respectively.
A numerical example is presented in Section~\ref{sec:example}, and conclusions and future directions in Section~\ref{sec:conclusion}.

\textit{Notation:} We denote by $\1$ the vector of all ones of appropriate dimension, $\| \cdot\|_2$ denotes the Euclidean norm. Inequalities between vectors and matrices are understood component-wise. A matrix $W=[w_{ij}] \in \mathbb{R}^{n \times n}$ is row-stochastic if $w_{ij} \ge 0$ for all $i,j$ and $W\1 = \1$. The Perron--Frobenius eigenvalue of a matrix is its spectral radius, which is its largest eigenvalue in absolute value \cite{hornMatrixAnalysis2012}.
A Schur stable matrix has all eigenvalues strictly inside the open unit disk. For a symmetric matrix $W = W^\top$, all eigenvalues are real, and we order them as $\lambda_1(W) \le \cdots \le \lambda_n(W)$.
The undirected graph induced by a symmetric matrix $W$ is denoted $\G(W)=(\nodeSet,\edgeSet,W)$, where $\nodeSet$ is the set of nodes with $\card{\nodeSet}=n$ and $(i,j) \in \edgeSet$ is an edge whenever $w_{ji} > 0$. A graph $\G(W)$ is connected if there exists an edge sequence $(j,i_1),\dots,(i_k,i)$ within $\edgeSet$ between each pair of nodes in $\nodeSet$.

\section{Model and Problem Formulation}\label{sec:model_pbl_formulation}
To model the opinion-forming process in a community of $n$ agents represented by a connected undirected graph $\G(W)$, we consider the following class of nonlinear interconnected systems
\begin{align}
    x(t+1) = aWx(t) + (1-a)\psi(x(t),u(t)).
    \label{eqn:model}
\end{align}
The state vector $x(t) = [x_1(t)\cdots x_n(t)]^T\in \mathcal{X}^n:=[-1,1]^n$ represents the agents' opinions, $u(t)= [u_1(t)\,\cdots\,u_n(t)]^T\in \U^n:=[-1,1]^n$ represents the recommendation, $W = [w_{ij}]\in \mathbb{R}^{n \times n}$ is the adjacency matrix of the network $\G$. The scalar parameter $a\in(0,1)$ represents a trade-off between a social term, describing how agents interact with each other and are influenced by the state of their neighbors, and a term representing the personal recommendation. For $a=1$, \eqref{eqn:model} reduces to the well-known DeGroot model \cite{degrootReachingConsensus1974}.
Each nonlinear function $\psi_i(x_i(t),u_i(t))$ in $\psi(x(t),u(t))=[\psi_1(x_1(t),u_1(t))\,\cdots\,\psi_n(x_n(t),u_n(t))]^T$ describes how the recommended opinion is received by agent $i$. 

In this work, we assume that the matrix $W$ is symmetric, row-stochastic (with a unique Perron-Frobenius eigenvalue equal to $1$). 
We consider the following class of nonlinear functions $\psi_i(x_i,u_i):\, \mathcal{X}\times \U \to \mathcal{X}$, $i=1,\dots,n$,
\begin{align}
    \psi_i(x_i,u_i)= u_i \exp(-\alpha_i (u_i-x_i)^2),\ x_i\in \mathcal{X},u_i\in \U.
\end{align}
Each nonlinear function $\psi_i$ characterizes how agent $i$ accepts the recommendation $u_i$. The choice of $\psi_i$ is motivated by phenomena from psychology describing how individuals perceive recommendations \cite{deffuantModellingGroupOpinion2004}. In particular, it captures the idea that $u_i = x_i$ is a neutral recommendation that makes agent $i$ keep its current belief, whereas when $\abs{u_i - x_i} \gg 0$ agent $i$ tends to reject the recommendation $u_i$. Motivated by the confirmation bias effect from psychology discussed in Section~\ref{sec:intro}, the function $\psi_i$ decreases smoothly with increasing distance between the recommendation $u_i$ and the actual opinion $x_i$. The parameter $\alpha_i$ represents agent $i$'s sensitivity to a recommendation, i.e., a larger $\alpha_i$ leads to a smaller acceptance window as the influence function decays faster. In this preliminary work, we make the following assumption: 
\begin{assumption}
   We assume the homogeneous case \mbox{$\alpha_1=\dots=\alpha_n=:\alpha>0$} for the agents' sensitivities.\label{ass:alpha_homogeneous}
\end{assumption}
The following lemma shows that model~\eqref{eqn:model} is well-defined.
\begin{lemma}
    The model~\eqref{eqn:model} is well-posed, i.e., $x(t)\in \mathcal{X}^n$ for all $t$ given that $u(t)\in \U^n$ for all $t$ and $x(0)\in \mathcal{X}^n$.
\end{lemma}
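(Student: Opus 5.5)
The argument is a straightforward induction on $t$: the set $\mathcal{X}^n=[-1,1]^n$ is forward invariant under \eqref{eqn:model} for any admissible input. Assume $x(t)\in\mathcal{X}^n$ and $u(t)\in\U^n$; we show $x(t+1)\in\mathcal{X}^n$. The base case is the hypothesis $x(0)\in\mathcal{X}^n$.

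\emph{Step 1: the recommendation term lies in $\mathcal{X}$.} For each $i$, since $\alpha>0$ (Assumption~\ref{ass:alpha_homogeneous}) and $(u_i-x_i)^2\ge 0$, we have $0<\exp(-\alpha(u_i-x_i)^2)\le 1$. Hence
\begin{align*}
\abs{\psi_i(x_i(t),u_i(t))} = \abs{u_i(t)}\exp\left(-\alpha (u_i(t)-x_i(t))^2\right)\le \abs{u_i(t)}\le 1,
\end{align*}
so $\psi(x(t),u(t))\in\mathcal{X}^n$. This also confirms that $\psi_i$ maps $\mathcal{X}\times\U$ into $\mathcal{X}$, as asserted in the model definition.

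\emph{Step 2: the social term lies in $\mathcal{X}$.} Since $W$ is row-stochastic, each component satisfies
\begin{align*}
(Wx(t))_i=\sum_j w_{ij}x_j(t),\qquad w_{ij}\ge 0,\qquad \sum_j w_{ij}=1,
\end{align*}
so $(Wx(t))_i$ is a convex combination of numbers in $[-1,1]$ and therefore lies in $[-1,1]$.

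\emph{Step 3: convex combination.} With $a\in(0,1)$, each component $x_i(t+1)=a(Wx(t))_i+(1-a)\psi_i$ is a convex combination of two elements of the convex interval $[-1,1]$. Hence $x_i(t+1)\in[-1,1]$ for every $i$, so $x(t+1)\in\mathcal{X}^n$, which completes the induction.

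No step is a real obstacle; the only point needing care is Step 1, where the positivity of $\alpha$ guarantees the exponential factor does not exceed $1$. Symmetry of $W$ is not used.
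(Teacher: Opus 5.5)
Your proof is correct and follows essentially the same argument as the paper. Both proceed by induction on $t$: the bound $\abs{\psi_i}\le\abs{u_i}\le 1$ comes from the exponential factor being at most $1$, row-stochasticity of $W$ keeps $Wx(t)$ in $[-1,1]^n$, and the convex combination with weight $a\in(0,1)$ closes the step (your remark that symmetry of $W$ is unused is also accurate).
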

\begin{proof}
    Suppose $x(t)\in \mathcal{X}^n$. From $W$ row-stochastic, it follows that $-\1\leq W x(t)\leq \1$. Furthermore, $0\leq\exp(-\alpha(x_i-u_i)^2)\leq 1$ so $\lvert \psi_i\rvert\leq \lvert u_i\rvert\leq 1$, thus $-\1\leq\psi(x(t),u(t))\leq \1$. Combining with $a\in(0,1)$, we obtain $x(t+1)\in \mathcal{X}^n$, and the result follows by induction.
\end{proof}

Given the model \eqref{eqn:model} and Assumption~\ref{ass:alpha_homogeneous}, we want to investigate: (i) whether a consensus to a desired target opinion $\xtarget=\1$ can be reached with a time-invariant broadcast recommendation $u(t)=u\1$ for all $t$; (ii) what the optimal time-varying and targeted recommendations $u_i(t)$, $i=1,\dots,n$, are that guarantee convergence to $\xtarget= \1$; and (iii) whether robust convergence can be guaranteed for an uncertain sensitivity $\alpha \in [\alpha^-, \alpha^+]$.

\begin{remark}
    In practice, it is common to assume a target $\xtarget\in\{0,\1\}$ \cite{fontanInfluencingOpinionDynamics2024,ravazziOptimalPolicyDesign2026}. 
    This also justifies the restriction to positive recommendations $u_i>0$ in the analysis of Section~\ref{sec:broadcast}.
\end{remark}

\section{Time-invariant broadcast recommendations}\label{sec:broadcast}
We first study the case of constant recommendations. 
In particular, we assume that $u(t) = u\1$, where $u$ is a scalar, constant across agents and time, representing a time-invariant broadcast strategy. The goal of the centralized recommendations is to guide the collective opinion towards the target $\xtarget=\1$. It is then natural to interpret $u$ as a recommendation towards that target and to assume $u \in \U_+ \coloneqq [0,1]$.
Our goal is to study the qualitative behavior of the nonlinear system~\eqref{eqn:model} as the recommendation $u$ varies, effectively performing a bifurcation analysis with $u$ as the bifurcation parameter. We investigate the existence and stability of the equilibrium points of \eqref{eqn:model} (Theorems~\ref{thm:existence} and \ref{thm:stability}, respectively), and provide an interpretation of these results in the context of social recommendations.

The following theorem establishes the existence of \textit{consensus equilibrium points} for the system~\eqref{eqn:model}, defined as equilibrium points in $\vspan{\1}$. In particular, it shows that there exists a threshold value for $u$ above which the system~\eqref{eqn:model} exhibits a fold bifurcation (also called saddle-node or tangent bifurcation) \cite{kuznetsovElementsAppliedBifurcation2023}.
\begin{theorem}[Existence]\label{thm:existence}
    Assume $u$ is constant and homogeneous across agents, i.e., $u(t) = u \1$ for all $t$, $u\in \U_+$.
    
    \begin{enumerate}[label=(\roman*)]
    \item For any value of $u\in \U_+$, the system~\eqref{eqn:model} admits a consensus equilibrium point given by
    \begin{equation}
        \xeq = u \1.\label{eqn:consensus_eq}
    \end{equation}
    \item Assume that $\alpha>2$. There exists a threshold value $\uast$ for $u\in \U_+$ such that: If $u<\uast$, $\xeq$ in \eqref{eqn:consensus_eq} is the unique consensus equilibrium point for the system~\eqref{eqn:model}.
    When $u$ crosses $\uast$, the system undergoes a fold bifurcation and two new consensus equilibria appear.
    The threshold value $\uast$ solves the nonlinear equation 
    \begin{align}
    \uast \in (\sqrt{2/\alpha},1]:\quad \xi(\uast) &= \uast e^{-\alpha (\uast-\xi(\uast))^2},
    \label{eqn:u_ast}
    \end{align}
    where $\xi(\uast) = \frac{\uast}{2} - \sqrt{\big(\frac{\uast}{2}\big)^2 - \frac{1}{2\alpha}}$.
    \end{enumerate}
    \vspace{0.1cm}
\end{theorem}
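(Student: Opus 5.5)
The plan is to reduce the search for consensus equilibria to a scalar equation, then count its solutions as a function of $u$.

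\emph{Reduction.} Substitute $x=c\1$ into \eqref{eqn:model} and use $W\1=\1$. Since the $\psi_i$ are identical under Assumption~\ref{ass:alpha_homogeneous} and $u(t)=u\1$, the equilibrium condition becomes $c = ac + (1-a)\,u e^{-\alpha(u-c)^2}$. Because $a\in(0,1)$, this is equivalent to
\[
c = g(c) := u\, e^{-\alpha (u-c)^2}, \qquad c\in[-1,1].
\]
Part (i) follows immediately, since $g(u)=u$. For part (ii), I first localize the solutions.
\begin{itemize}
\item If $u=0$, then $g\equiv 0$ and $c=0$ is the only solution.
\item If $u>0$, then $g(c)\in(0,u]$, so every solution satisfies $c\in(0,u]$. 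In particular, $c>u$ is impossible, because it would give $g(c)\le u<c$.
\end{itemize}
On $(0,u)$ I take logarithms and study
\[
\varphi(c) := \ln u - \ln c - \alpha (u-c)^2 ,
\]
whose zeros are exactly the consensus equilibria other than $\xeq$. This function satisfies $\varphi(0^+)=+\infty$, $\varphi(u)=0$, and $\varphi'(c) = -1/c + 2\alpha(u-c)$.

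\emph{Shape of $\varphi$.} The condition $\varphi'(c)=0$ is $2\alpha c(u-c)=1$, i.e.\ $c = u/2 \pm \sqrt{(u/2)^2 - 1/(2\alpha)}$. This is also the tangency condition $g'(c)=1$ at a fixed point, since there $g'(c)=2\alpha(u-c)g(c)=2\alpha(u-c)c$.
\begin{itemize}
\item If $u<\sqrt{2/\alpha}$, then $\varphi'<0$ on $(0,u)$. Hence $\varphi>\varphi(u)=0$ there, so $\xeq$ is the unique consensus equilibrium.
\item If $u>\sqrt{2/\alpha}$, denote the two critical points by $\xi(u)<c_2(u)$. Then $\varphi$ decreases on $(0,\xi)$, increases on $(\xi,c_2)$, and decreases on $(c_2,u)$. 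It follows that $\varphi(c_2)>\varphi(u)=0$, so every extra zero lies in $(0,c_2)$, and the count is decided by the sign of $m(u):=\varphi(\xi(u))$:
  \begin{itemize}
  \item $m(u)>0$ gives no extra zero;
  \item $m(u)=0$ gives one double zero, which is the tangency;
  \item $m(u)<0$ gives exactly two simple zeros, one in $(0,\xi)$ and one in $(\xi,c_2)$.
  \end{itemize}
\end{itemize}
The condition $m(\uast)=0$ is precisely \eqref{eqn:u_ast}, and the transition from zero to two extra solutions through a double root is the fold.

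\emph{Monotonicity of $m$ (main step).} The key point is that $m$ changes sign exactly once. By the envelope argument, since $\partial_c\varphi=0$ at $c=\xi$,
\[
m'(u) = \partial_u\varphi\big|_{c=\xi} = \frac1u - 2\alpha(u-\xi).
\]
Using $2\alpha\xi(u-\xi)=1$, this equals $1/u - 1/\xi$, which is negative because $\xi<u$. So $m$ is strictly decreasing on $(\sqrt{2/\alpha},1]$. At the left endpoint $\xi=u/2$ and $\alpha u^2 = 2$, so $m = \ln 2 - \alpha u^2/4 = \ln 2 - 1/2>0$. Hence $\uast$, if it exists, is unique, and $m>0$ on $(\sqrt{2/\alpha},\uast)$ gives uniqueness of $\xeq$ for all $u<\uast$. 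The condition $\alpha>2$ is exactly what makes $\sqrt{2/\alpha}<1$, so this interval lies inside $\U_+$.

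\emph{Expected obstacle.} The delicate step is showing that the crossing actually happens inside $\U_+$, i.e.\ $m(1)\le 0$. I would try to verify this directly from $\varphi(\xi(1))=-\ln\xi(1)-\alpha(1-\xi(1))^2$ together with the bound $\xi(1)\approx 1/(2\alpha)$. If that bound fails for $\alpha$ close to $2$, the statement should be read as: $\uast$ is defined by \eqref{eqn:u_ast} when $m(1)\le 0$, and otherwise no fold occurs within $\U_+$.

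Finally, the fold itself is nondegenerate. At $(c,u)=(\xi(\uast),\uast)$ we have $\partial_c h=0$ for $h=c-g$. The transversality condition $\partial_u h\neq0$ follows from $m'(\uast)\neq 0$. The quadratic tangency condition follows from $\varphi''(\xi) = 1/\xi^2 - 2\alpha \neq 0$, which holds because $\xi<1/\sqrt{2\alpha}$ on the smaller branch.
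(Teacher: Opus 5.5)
Your argument is correct. It takes a more global route than the paper, and it proves things the paper only asserts. The paper runs only the local fold-recognition checklist:
\begin{enumerate}
\item it imposes the fixed-point and tangency conditions and solves $2\alpha\xi^2-2\alpha\uast\xi+1=0$;
\item it discards the larger root because that root lies between the inflection points, where $\psi(\cdot,\uast)$ is concave;
\item it checks nondegeneracy and transversality at the smaller root.
\end{enumerate}
It never shows that \eqref{eqn:u_ast} has a root, that the root is unique, or that $\xeq$ is the only consensus equilibrium for $u<\uast$. Nor does it show the ``at most three'' claim made after the theorem. Your proof supplies exactly these global facts through the log-transform $\varphi=\ln g-\ln c$, the decrease/increase/decrease shape, and the envelope identity $m'(u)=1/u-1/\xi<0$ with $m(\sqrt{2/\alpha})=\ln 2-\tfrac12>0$. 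The inequality $\varphi(c_2)>0$ is also a cleaner reason than the paper's for discarding the $+$ root. The one thing the paper makes explicit that you use only implicitly is the invariance of $\vspan{\1}$. At the fold point $J=aW+(1-a)I$ has $1$ as a simple eigenvalue with eigenvector $\1$, since $\G(W)$ is connected. Together with the invariance, this is what lets the scalar fold be read as a fold of the $n$-dimensional system.

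The step you flag as the obstacle is not a gap in your reasoning but a defect in the statement, so do not try to prove $m(1)\le 0$ for all $\alpha>2$: it is false.
\begin{enumerate}
\item Apply your envelope trick in $\alpha$ instead of $u$. With $\xi=\xi(1)$, you get $\partial_\alpha m(1)=\partial_\alpha\varphi|_{c=\xi}=-(1-\xi)^2<0$.
\item As $\alpha\downarrow 2$, $\xi(1)\to\tfrac12$ and $m(1)\to\ln 2-\tfrac12>0$. As $\alpha\to\infty$, $m(1)\to-\infty$.
\item Eliminating $\alpha=1/(2\xi(1-\xi))$, the critical value solves $-\ln\xi=(1-\xi)/(2\xi)$. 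This gives $\xi\approx 0.2847$ and $\alpha_c\approx 2.455$.
\end{enumerate}
Hence \eqref{eqn:u_ast} has a root in $(\sqrt{2/\alpha},1]$ if and only if $\alpha\ge\alpha_c$. This is consistent with $\uast=0.7835$ for $\alpha=4$ in Fig.~\ref{fig:bifurcation}. For $\alpha\in(2,\alpha_c)$ there is no such root; for example, $\alpha=2.1$ gives $m(1)\approx 0.16$. In that range $\xeq=u\1$ is the unique consensus equilibrium for every $u\in\U_+$, and no fold occurs in $\U_+$. The paper's proof misses this because it never checks that \eqref{eqn:u_ast} is solvable on that interval.

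If you replace $\alpha>2$ by $\alpha\ge\alpha_c$, your proof gives the full statement. Use $\alpha>\alpha_c$ if you want $u$ to actually cross $\uast$ inside $\U_+$. Your suggested approximation $\xi(1)\approx 1/(2\alpha)$ is only a large-$\alpha$ heuristic and is not needed.
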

\begin{proof}
First, it is possible to observe that the consensus space is invariant. Let $\Pi = \frac{1}{n}\1 \1^T$ and $I-\Pi$ be the projection operators onto the consensus space $\vspan{\1}$ and onto the orthogonal space $\vspan{\1^\perp}$, respectively.
We can decompose any vector $x \in \mathcal{X}^n$ as $x= \xi \1 + z$, where $\xi = \frac{1}{n} \1^T x\in \mathcal{X}$ and $z \in \vspan{\1^\perp}\cap\mathcal{X}^n$ (i.e., $\1^T z=0$).
Since $W$ is row-stochastic and $W=W^T$, the projection of the system~\eqref{eqn:model} in the coordinates $(\xi_t,z_t):=(\xi(t),z(t))$ can be rewritten as
\begin{subequations}
\begin{align}
    \xi_{t+1} &= a \xi_{t} + (1-a) \frac{1}{n}\1^T \psi( \xi_{t}\1 + z_{t}, u)
    \label{eqn:proj_model_consensus_0}
    \\
    z_{t+1} &= a W z_{t} + (1-a) (I-\Pi) \psi( \xi_{t}\1 + z_{t}, u)
\end{align}   
\label{eqn:proj_model}
\end{subequations}
Since the nonlinearities are identical under the assumption of homogeneous $\alpha$ and $u$ (i.e., $\psi_1=\dots = \psi_n$, see Appendix~\ref{sec:propertiesNonlinearFunctions}), $(I-\Pi) \psi( \xi_{t}\1, u)=0$, that is, the consensus space is invariant.
Therefore, \eqref{eqn:proj_model_consensus_0} reduces to
\begin{align}
    \xi_{t+1} &= a \xi_{t} + (1-a) \psi(\xi_{t}, u),
    \label{eqn:proj_model_consensus}
\end{align}
where, with some abuse of notation, we use $\psi( \xi_{t} \1, u) = \psi( \xi_{t}, u) \1$.
The equilibrium points $\xi^\ast$ of \eqref{eqn:proj_model_consensus} correspond to the consensus equilibrium points of the system~\eqref{eqn:model}; moreover, from \eqref{eqn:proj_model_consensus}, $x = \xi\1 \in \mathcal{X}^n$ is a consensus equilibrium point for the system~\eqref{eqn:model} if and only if $ \xi = \psi(\xi,u)$.
\smallskip

(i) We note that $\xi = u$ is always a solution of $\xi = \psi(\xi,u)$, meaning that $\xeq = u \1$ is an admissible equilibrium point of the system~\eqref{eqn:model} for any value of $u \in \U_+$.
\smallskip

(ii) We now prove that the system~\eqref{eqn:proj_model_consensus} undergoes a fold bifurcation at $u=\uast$, where $\uast$ is defined in \eqref{eqn:u_ast}. The proof follows \cite[Ch.~4--5]{kuznetsovElementsAppliedBifurcation2023}.

The 1-dimensional system~\eqref{eqn:proj_model_consensus} undergoes a fold bifurcation at $(\xi^\ast,\uast)$ if the following system of equations holds:
\begin{align}
\begin{cases}
    \xi^\ast = \psi(\xi^\ast,\uast)\\
    1 = \pde{\psi}{\xi}(\xi^\ast,\uast)\\
    \ppde{\psi}{\xi}(\xi^\ast,\uast)\ne 0 \\
    \pde{\psi}{u}(\xi^\ast,\uast)\ne 0\\
\end{cases}\!\!\!\!\!\!
\Leftrightarrow
\begin{cases}
    \xi^\ast = \uast e^{-\alpha (\uast-\xi^\ast)^2}
    \\
    1 = 2 \alpha \uast (\uast-\xi^\ast)  e^{-\alpha (\uast-\xi^\ast)^2}
    \\
    \pde{\psi}{\xi}(\xi^\ast\!,\uast) (\uast\!-\!\xi^\ast)\! -\! \psi(\xi^\ast\!,\uast)\! \ne\! 0
    \\
    1 - 2\alpha (\uast-\xi^\ast) \uast \ne 0 
\end{cases}
\label{eqn:admissible_eqpoint}
\end{align}
The conditions in \eqref{eqn:admissible_eqpoint} are, respectively, the fixed-point, fold bifurcation, nondegeneracy, and transversality conditions \cite[Theorem 4.2]{kuznetsovElementsAppliedBifurcation2023}. 
Given $\xi^\ast = \uast e^{-\alpha (\uast-\xi^\ast)^2}$, the fold bifurcation condition implies\vspace{-.3em}
\begin{align*}
    1 = 2 \alpha (\uast-\xi^\ast) \xi^\ast
    &\;\;\Leftrightarrow\;\; 2 \alpha (\xi^\ast)^2 -2 \alpha \uast \xi^\ast + 1 = 0   
\\
    \Leftrightarrow\;\; \xi_{1,2}^\ast(\uast) &= 
    \frac{\uast}{2}\pm \sqrt{ \Big(\frac{\uast}{2}\Big)^2 - \frac{1}{2\alpha} }.
\end{align*}
The discriminant exists when $\uast > \sqrt{2/\alpha}$. 
Among the two possible choices, namely, $\xi_{1}^\ast(\uast)$ ($-$) and $\xi_{2}^\ast(\uast)$ ($+$), $\xi_{2}^\ast(\uast)$ is not admissible since it does not satisfy both the fixed-point and fold bifurcation equations in \eqref{eqn:admissible_eqpoint}. One quick way to verify this is by observing that $\xi_{2}^\ast(\uast) $ is located to the right of the first inflection point, i.e., $\xi_{2}^\ast(\uast)> \uast-\frac{1}{\sqrt{2\alpha}}$. Indeed \vspace{-.3em}
\begin{align*}
    \frac{\uast}{2}&+\sqrt{ \Big(\frac{\uast}{2}\Big)^2 - \frac{1}{2\alpha} } > \uast-\frac{1}{\sqrt{2\alpha}}
    \\
    \Leftrightarrow\quad&
    \sqrt{ \Big(\frac{\uast}{2}\Big)^2 - \frac{1}{2\alpha} } > \frac{\uast}{2}-\frac{1}{\sqrt{2\alpha}}
    \\
    \Leftrightarrow\quad&
    \Big(\frac{\uast}{2}\Big)^2 - \frac{1}{2\alpha} > \Big(\frac{\uast}{2}\Big)^2 +\frac{1}{2\alpha}-\frac{u}{\sqrt{2\alpha}}
    \\
    \Leftrightarrow\quad&
    u > \sqrt{2/\alpha},
\end{align*}
which is satisfied as it represents the condition for the existence of the discriminant.
Therefore, the unique admissible point satisfying the fixed-point condition and the fold bifurcation condition in \eqref{eqn:admissible_eqpoint} is given by
$\xi_{1}^\ast(\uast)= \frac{\uast}{2}- \sqrt{ \big(\frac{\uast}{2}\big)^2 - \frac{1}{2\alpha} }$. Consequently, the threshold value $\uast$ solves the nonlinear equation 
\begin{align*}
\xi_1^\ast(\uast) = \uast e^{-\alpha (\uast-\xi_1^\ast(\uast))^2},\quad \uast>\sqrt{2/\alpha},
\end{align*}
for which a closed-form may not be derived. Finally, observe that $\pde{\psi}{\xi}(\xi_1^\ast(\uast),\uast)(\uast-\xi_1^\ast(\uast))-\psi(\xi_1^\ast(\uast),\uast) = \uast-2\xi_1^\ast(\uast) >0$, and $2\alpha (\uast-\xi_1^\ast(\uast)) \uast = \uast/\xi_1^\ast(\uast) > 1$. Therefore, the nondegeneracy conditions and transversality conditions in \eqref{eqn:admissible_eqpoint} are also satisfied, which completes the recognition problem for a fold bifurcation.
\end{proof}

A consequence of Theorem~\ref{thm:existence} is that the system~\eqref{eqn:model} cannot admit more than three consensus equilibrium points; moreover, it follows directly from the proof of Theorem~\ref{thm:existence} that, component-wise, each consensus equilibrium point lies in $[0,u]$.
We conjecture that system~\eqref{eqn:model} cannot admit any other equilibria, that is, disagreement equilibrium points of the form $x = \xi \1 + z$, with $z\ne 0$ and $z\in \vspan{\1^\perp}\cap\mathcal{X}^n$, cannot exist. We leave the proof of the conjecture for future work. Nonetheless, it can be shown that, component-wise, each equilibrium point is upper bounded by $u$ (Lemma~\ref{lem:location_eq}). 
\begin{conjecture}
Let $L_{\psi,\max}:=\displaystyle\max_{u\in\U_+} L_{\psi}(u) = \sqrt{2\alpha} e^{-1/2}$, where $L_{\psi}(u)$ is the Lipschitz constant of $\psi$. 
If $a>a_{\min}: = \displaystyle\frac{L_{\psi,\max}-1}{L_{\psi,\max}-\lambda_{n-1}(W)}$ and $u\in\U_+$, then the system~\eqref{eqn:model} admits only consensus equilibrium points.
\end{conjecture}
\begin{lemma}\label{lem:location_eq}
    Assume $u\in\U_+$. If $\xeq \in \mathcal{X}^n$ is an equilibrium point of \eqref{eqn:model}, then $\xeq_i \le u$ for all $i=1,\dots,n$.
\end{lemma}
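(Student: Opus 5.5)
A maximum-principle argument on the largest component of the equilibrium should work. Let $\xeq$ be an equilibrium of \eqref{eqn:model} with $u(t)=u\1$, $u\in\U_+$, so that
\begin{align*}
\xeq = aW\xeq + (1-a)\psi(\xeq,u\1).
\end{align*}
Choose an index $k$ with $\xeq_k = M := \max_i \xeq_i$. Row-stochasticity and nonnegativity of $W$ give $(W\xeq)_k = \sum_j w_{kj}\xeq_j \le M$. Inserting this into the $k$-th component of the fixed-point equation gives
\begin{align*}
M \le aM + (1-a)\psi_k(M,u),
\end{align*}
and, since $a\in(0,1)$, dividing by $1-a$ yields $M \le \psi_k(M,u) = u\,e^{-\alpha(u-M)^2}$.

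Next, bound the right-hand side by $u$. Because $u\ge 0$ and $0< e^{-\alpha(u-M)^2}\le 1$, we have $u\,e^{-\alpha(u-M)^2}\le u$. Hence $M\le u$, and therefore $\xeq_i\le M\le u$ for every $i$, which is the claim. Well-posedness of the model guarantees that $\xeq\in\mathcal{X}^n$, so $\psi_k$ is evaluated in its domain.

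Nothing here is a genuine obstacle. The only care needed is to use the sign of $u$ (i.e., $u\in\U_+$) when bounding $\psi_k$ by $u$. We also need $a<1$ so that the division by $1-a$ is legitimate; the connectivity of the graph is not needed.
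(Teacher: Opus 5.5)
Your proof is correct and is essentially the paper's argument: both take the maximal component $M$ of the equilibrium, use row-stochasticity to get $M \le aM + (1-a)\psi(M,u)$, and conclude from $\psi(M,u) = u\,e^{-\alpha(u-M)^2} \le u$ (using $u \ge 0$). The only difference is cosmetic: you argue directly via $M \le \psi(M,u) \le u$, whereas the paper assumes $M > u$ and derives a contradiction.
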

\begin{proof}
    Assume, by contradiction, that there exists an index $i$ s.t. $\xeq_i > u$. Let $x_M := \max_{i=1,\dots,n} \xeq_i>u$. By the equilibrium condition, we have\vspace{-0.1cm}
    \begin{align*}
        \xeq_M &= a \scalebox{0.95}{$\sum$}_{j=1}^n w_{Mj} \xeq_j + (1-a) \psi_M(\xeq_M,u)
        \\&\le a \scalebox{0.95}{$\sum$}_{j=1}^n w_{Mj} \xeq_M + (1-a) \psi_M(\xeq_M,u)
        \\&= a \xeq_M + (1-a) \psi_M(\xeq_M,u)
    \end{align*}
    Since $\psi_M(\xeq_M,u) < \xeq_M$ for $\xeq_M > u$, it follows that $\xeq_M < (a + 1-a) \xeq_M = \xeq_M$, which yields a contradiction.
\end{proof}
\begin{figure}[t!]\centering
    \subfloat[]{\begin{tikzpicture}
\begin{axis}[
  xlabel={$\xi$}, ylabel={$\psi(\xi,u)$},
  xmin=-1, xmax=1, ymin=0, ymax=1,
  axis equal image,
  grid=major, grid style={line width=0.3pt, draw=gray!30},
  width=\linewidth, height=5.5cm,
  legend style={at={(0.05,0.95)}, anchor=north west, font=\small},
]
\addplot[forget plot, dashed, black, domain=-1:1, samples=2] {x};
\addplot[orange!80!black, thick] coordinates {
  (-1.0000,0.0002)
  (-0.9799,0.0002)
  (-0.9599,0.0002)
  (-0.9398,0.0003)
  (-0.9198,0.0004)
  (-0.8997,0.0005)
  (-0.8797,0.0006)
  (-0.8596,0.0007)
  (-0.8396,0.0009)
  (-0.8195,0.0010)
  (-0.7995,0.0013)
  (-0.7794,0.0015)
  (-0.7594,0.0018)
  (-0.7393,0.0022)
  (-0.7193,0.0027)
  (-0.6992,0.0032)
  (-0.6792,0.0038)
  (-0.6591,0.0045)
  (-0.6391,0.0053)
  (-0.6190,0.0063)
  (-0.5990,0.0074)
  (-0.5789,0.0087)
  (-0.5589,0.0101)
  (-0.5388,0.0118)
  (-0.5188,0.0137)
  (-0.4987,0.0158)
  (-0.4787,0.0182)
  (-0.4586,0.0210)
  (-0.4386,0.0240)
  (-0.4185,0.0274)
  (-0.3985,0.0312)
  (-0.3784,0.0354)
  (-0.3584,0.0401)
  (-0.3383,0.0452)
  (-0.3183,0.0508)
  (-0.2982,0.0569)
  (-0.2782,0.0635)
  (-0.2581,0.0707)
  (-0.2381,0.0785)
  (-0.2180,0.0868)
  (-0.1980,0.0957)
  (-0.1779,0.1051)
  (-0.1579,0.1152)
  (-0.1378,0.1258)
  (-0.1178,0.1369)
  (-0.0977,0.1485)
  (-0.0777,0.1606)
  (-0.0576,0.1731)
  (-0.0376,0.1860)
  (-0.0175,0.1992)
  (0.0025,0.2126)
  (0.0226,0.2262)
  (0.0426,0.2400)
  (0.0627,0.2537)
  (0.0827,0.2674)
  (0.1028,0.2809)
  (0.1228,0.2942)
  (0.1429,0.3070)
  (0.1629,0.3195)
  (0.1830,0.3313)
  (0.2030,0.3425)
  (0.2231,0.3529)
  (0.2431,0.3625)
  (0.2632,0.3711)
  (0.2832,0.3788)
  (0.3033,0.3853)
  (0.3233,0.3907)
  (0.3434,0.3949)
  (0.3634,0.3979)
  (0.3835,0.3996)
  (0.4035,0.4000)
  (0.4236,0.3991)
  (0.4436,0.3970)
  (0.4637,0.3936)
  (0.4837,0.3889)
  (0.5038,0.3831)
  (0.5238,0.3762)
  (0.5439,0.3682)
  (0.5639,0.3592)
  (0.5840,0.3494)
  (0.6040,0.3387)
  (0.6241,0.3272)
  (0.6441,0.3152)
  (0.6642,0.3026)
  (0.6842,0.2896)
  (0.7043,0.2762)
  (0.7243,0.2626)
  (0.7444,0.2489)
  (0.7644,0.2352)
  (0.7845,0.2215)
  (0.8045,0.2079)
  (0.8246,0.1945)
  (0.8446,0.1814)
  (0.8647,0.1686)
  (0.8847,0.1563)
  (0.9048,0.1444)
  (0.9248,0.1329)
  (0.9449,0.1220)
  (0.9649,0.1116)
  (0.9850,0.1018)
};
\addlegendentry{$u = 0.4$}
\addplot[forget plot, only marks, mark=*, mark size=2pt, mark options={fill=orange!80!black, draw=orange!80!black}] coordinates {
  (0.4000,0.4000)
};
\addplot[blue!70!black, very thick] coordinates {
  (-1.0000,0.0000)
  (-0.9799,0.0000)
  (-0.9599,0.0000)
  (-0.9398,0.0000)
  (-0.9198,0.0000)
  (-0.8997,0.0000)
  (-0.8797,0.0000)
  (-0.8596,0.0000)
  (-0.8396,0.0000)
  (-0.8195,0.0000)
  (-0.7995,0.0000)
  (-0.7794,0.0000)
  (-0.7594,0.0001)
  (-0.7393,0.0001)
  (-0.7193,0.0001)
  (-0.6992,0.0001)
  (-0.6792,0.0002)
  (-0.6591,0.0002)
  (-0.6391,0.0002)
  (-0.6190,0.0003)
  (-0.5990,0.0004)
  (-0.5789,0.0005)
  (-0.5589,0.0006)
  (-0.5388,0.0007)
  (-0.5188,0.0009)
  (-0.4987,0.0011)
  (-0.4787,0.0013)
  (-0.4586,0.0016)
  (-0.4386,0.0020)
  (-0.4185,0.0024)
  (-0.3985,0.0029)
  (-0.3784,0.0035)
  (-0.3584,0.0043)
  (-0.3383,0.0051)
  (-0.3183,0.0061)
  (-0.2982,0.0073)
  (-0.2782,0.0086)
  (-0.2581,0.0102)
  (-0.2381,0.0121)
  (-0.2180,0.0142)
  (-0.1980,0.0166)
  (-0.1779,0.0194)
  (-0.1579,0.0226)
  (-0.1378,0.0263)
  (-0.1178,0.0304)
  (-0.0977,0.0351)
  (-0.0777,0.0403)
  (-0.0576,0.0462)
  (-0.0376,0.0528)
  (-0.0175,0.0602)
  (0.0025,0.0683)
  (0.0226,0.0773)
  (0.0426,0.0872)
  (0.0627,0.0980)
  (0.0827,0.1099)
  (0.1028,0.1228)
  (0.1228,0.1367)
  (0.1429,0.1517)
  (0.1629,0.1679)
  (0.1830,0.1852)
  (0.2030,0.2036)
  (0.2231,0.2231)
  (0.2431,0.2436)
  (0.2632,0.2653)
  (0.2832,0.2879)
  (0.3033,0.3115)
  (0.3233,0.3359)
  (0.3434,0.3610)
  (0.3634,0.3868)
  (0.3835,0.4131)
  (0.4035,0.4398)
  (0.4236,0.4666)
  (0.4436,0.4936)
  (0.4637,0.5204)
  (0.4837,0.5469)
  (0.5038,0.5729)
  (0.5238,0.5983)
  (0.5439,0.6227)
  (0.5639,0.6461)
  (0.5840,0.6681)
  (0.6040,0.6888)
  (0.6241,0.7077)
  (0.6441,0.7249)
  (0.6642,0.7401)
  (0.6842,0.7532)
  (0.7043,0.7641)
  (0.7243,0.7726)
  (0.7444,0.7787)
  (0.7644,0.7823)
  (0.7845,0.7835)
  (0.8045,0.7821)
  (0.8246,0.7782)
  (0.8446,0.7719)
  (0.8647,0.7631)
  (0.8847,0.7520)
  (0.9048,0.7387)
  (0.9248,0.7233)
  (0.9449,0.7060)
  (0.9649,0.6868)
  (0.9850,0.6661)
};
\addlegendentry{$u = u^* = 0.7835$}
\addplot[forget plot, only marks, mark=*, mark size=2pt, mark options={fill=blue!70!black, draw=blue!70!black}] coordinates {
  (0.7835,0.7835)
  (0.2230,0.2230)
};
\addplot[green!50!black, thick] coordinates {
  (-1.0000,0.0000)
  (-0.9799,0.0000)
  (-0.9599,0.0000)
  (-0.9398,0.0000)
  (-0.9198,0.0000)
  (-0.8997,0.0000)
  (-0.8797,0.0000)
  (-0.8596,0.0000)
  (-0.8396,0.0000)
  (-0.8195,0.0000)
  (-0.7995,0.0000)
  (-0.7794,0.0000)
  (-0.7594,0.0000)
  (-0.7393,0.0000)
  (-0.7193,0.0000)
  (-0.6992,0.0000)
  (-0.6792,0.0000)
  (-0.6591,0.0001)
  (-0.6391,0.0001)
  (-0.6190,0.0001)
  (-0.5990,0.0001)
  (-0.5789,0.0001)
  (-0.5589,0.0002)
  (-0.5388,0.0002)
  (-0.5188,0.0003)
  (-0.4987,0.0004)
  (-0.4787,0.0004)
  (-0.4586,0.0006)
  (-0.4386,0.0007)
  (-0.4185,0.0009)
  (-0.3985,0.0011)
  (-0.3784,0.0013)
  (-0.3584,0.0016)
  (-0.3383,0.0020)
  (-0.3183,0.0024)
  (-0.2982,0.0029)
  (-0.2782,0.0035)
  (-0.2581,0.0042)
  (-0.2381,0.0051)
  (-0.2180,0.0061)
  (-0.1980,0.0072)
  (-0.1779,0.0086)
  (-0.1579,0.0102)
  (-0.1378,0.0121)
  (-0.1178,0.0143)
  (-0.0977,0.0168)
  (-0.0777,0.0197)
  (-0.0576,0.0230)
  (-0.0376,0.0267)
  (-0.0175,0.0310)
  (0.0025,0.0359)
  (0.0226,0.0414)
  (0.0426,0.0476)
  (0.0627,0.0545)
  (0.0827,0.0622)
  (0.1028,0.0708)
  (0.1228,0.0803)
  (0.1429,0.0909)
  (0.1629,0.1024)
  (0.1830,0.1151)
  (0.2030,0.1289)
  (0.2231,0.1439)
  (0.2431,0.1602)
  (0.2632,0.1777)
  (0.2832,0.1965)
  (0.3033,0.2166)
  (0.3233,0.2380)
  (0.3434,0.2606)
  (0.3634,0.2845)
  (0.3835,0.3096)
  (0.4035,0.3358)
  (0.4236,0.3630)
  (0.4436,0.3912)
  (0.4637,0.4202)
  (0.4837,0.4500)
  (0.5038,0.4803)
  (0.5238,0.5110)
  (0.5439,0.5419)
  (0.5639,0.5728)
  (0.5840,0.6036)
  (0.6040,0.6339)
  (0.6241,0.6637)
  (0.6441,0.6926)
  (0.6642,0.7205)
  (0.6842,0.7471)
  (0.7043,0.7721)
  (0.7243,0.7955)
  (0.7444,0.8169)
  (0.7644,0.8362)
  (0.7845,0.8532)
  (0.8045,0.8678)
  (0.8246,0.8797)
  (0.8446,0.8890)
  (0.8647,0.8955)
  (0.8847,0.8992)
  (0.9048,0.8999)
  (0.9248,0.8978)
  (0.9449,0.8928)
  (0.9649,0.8850)
  (0.9850,0.8744)
};
\addlegendentry{$u = 0.9$}
\addplot[forget plot, only marks, mark=*, mark size=2pt, mark options={fill=green!50!black, draw=green!50!black}] coordinates {
  (0.0479,0.0493)
  (0.9050,0.8999)
  (0.5560,0.5607)
};
\node[coordinate, pin={[pin edge={<-, black, line width=0.7pt, shorten <=3pt}, pin distance=0.6cm]135:$\xi^{\ast}_\mathrm{s}$}] at (axis cs:0.0479,0.0479) {};
\node[coordinate, pin={[pin edge={<-, black, line width=0.7pt, shorten <=3pt}, pin distance=0.6cm]0:$\xi^{\ast}_\mathrm{us}$}] at (axis cs:0.5560,0.5560) {};
\node[coordinate, pin={[pin edge={<-, black, line width=0.7pt, shorten <=3pt}, pin distance=0.6cm]180:$u$}] at (axis cs:0.9050,0.9050) {};
\end{axis}
\end{tikzpicture}}\\[-3pt]
    \subfloat[]{\begin{tikzpicture}
\begin{axis}[
  xlabel={$u$}, ylabel={$\xi^*$},
  xmin=0, xmax=1, ymin=0, ymax=1,
  grid=major, grid style={line width=0.3pt, draw=gray!30},
  width=\linewidth, height=5cm, clip=false,
]
\addplot[only marks, mark=square*, mark size=1.2pt, mark options={fill=black, draw=black}] coordinates {
  (0.0000,0.0000)
  (0.0200,0.0200)
  (0.0400,0.0400)
  (0.0600,0.0600)
  (0.0800,0.0800)
  (0.1000,0.1000)
  (0.1200,0.1200)
  (0.1400,0.1400)
  (0.1600,0.1600)
  (0.1800,0.1800)
  (0.2000,0.2000)
  (0.2200,0.2200)
  (0.2400,0.2400)
  (0.2600,0.2600)
  (0.2800,0.2800)
  (0.3000,0.3000)
  (0.3200,0.3200)
  (0.3400,0.3400)
  (0.3600,0.3600)
  (0.3800,0.3800)
  (0.4000,0.4000)
  (0.4200,0.4200)
  (0.4400,0.4400)
  (0.4600,0.4600)
  (0.4800,0.4800)
  (0.5000,0.5000)
  (0.5200,0.5200)
  (0.5400,0.5400)
  (0.5600,0.5600)
  (0.5800,0.5800)
  (0.6000,0.6000)
  (0.6200,0.6200)
  (0.6400,0.6400)
  (0.6600,0.6600)
  (0.6800,0.6800)
  (0.7000,0.7000)
  (0.7200,0.7200)
  (0.7400,0.7400)
  (0.7600,0.7600)
  (0.7800,0.7800)
  (0.7835,0.7835)
  (0.8050,0.8050)
  (0.8050,0.1304)
  (0.8250,0.1019)
  (0.8250,0.8250)
  (0.8450,0.0827)
  (0.8450,0.8450)
  (0.8650,0.0683)
  (0.8650,0.8650)
  (0.8850,0.0570)
  (0.8850,0.8850)
  (0.9050,0.0479)
  (0.9050,0.9050)
  (0.9250,0.0405)
  (0.9250,0.9250)
  (0.9450,0.0342)
  (0.9450,0.9450)
  (0.9650,0.0290)
  (0.9650,0.9650)
  (0.9850,0.9850)
  (0.9850,0.0246)
};
\addplot[only marks, mark=square*, mark size=1.2pt, mark options={fill=red!80!black, draw=red!80!black}] coordinates {
  (0.7835,0.2230)
  (0.8050,0.3439)
  (0.8250,0.3984)
  (0.8450,0.4437)
  (0.8650,0.4840)
  (0.8850,0.5211)
  (0.9050,0.5560)
  (0.9250,0.5892)
  (0.9450,0.6211)
  (0.9650,0.6518)
  (0.9850,0.6816)
};
\draw[dashed, blue!70!black, line width=0.8pt] (axis cs:0.7835,0) -- (axis cs:0.7835,1);
\node[blue!70!black, anchor=east, font=\small] at (axis cs:0.7835,0.12) {$u^* \longrightarrow$};
\end{axis}
\end{tikzpicture}}
    \caption{Bifurcation analysis of the system~\eqref{eqn:model} with a constant (and homogeneous across agents) $u$ and $\alpha=4$. The threshold value $\uast$ as defined in Theorem~\ref{thm:existence} is given by $\uast = 0.7835$.
    (a): Illustration of the nonlinearity projected on the consensus space, $\psi(\xi,u)$, with $\xi \in \mathcal{X}$, for three values of $u$: $u=0.4$, $u=\uast$, and $u=0.9$. The dashed black line depicts the identity map $\xi \mapsto \xi$. When $u<\uast$, $\psi(\xi,u)$ and the identity intersect at a single point (orange), yielding a unique consensus equilibrium for the system~\eqref{eqn:model}. At the threshold value $u=\uast$, the identity map is tangent to $\psi(\xi,u)$ (blue), indicating a fold bifurcation. For $u>\uast$, the two curves intersect at three distinct points (green), corresponding to three consensus equilibrium states ($\xeq=u\1$, $\xeq_{\mathrm{s}} = \xi_\mathrm{s}^\ast \1$, and $\xeq_{\mathrm{us}} = \xi_\mathrm{us}^\ast \1$).
    (b): Bifurcation diagram of \eqref{eqn:model} in the consensus space. Legend: x-axis: bifurcation parameter $u \in \{0,0.01, \dots, \uast, \dots, 1\}$. y-axis: Equilibrium points $\xi^\ast$ of the system \eqref{eqn:proj_model_consensus} for each value of $u$. Colormap: red = unstable, black = locally asymptotically stable.}\vspace{-0.5cm}
    \label{fig:bifurcation}
\end{figure}

The following theorem studies the stability of consensus equilibrium points. In particular, 
Theorem~\ref{thm:stability}\,(i) introduces a sufficient (and conservative) condition for the system~\eqref{eqn:model} to admit a unique and globally asymptotically stable equilibrium point. Theorem~\ref{thm:stability}\,(ii) presents local asymptotic stability condition for all the consensus equilibrium points of the system~\eqref{eqn:model}, whose existence was established in Theorem~\ref{thm:existence}.

\begin{theorem}[Stability]\label{thm:stability}
Assume $u\in\U_+$ is constant and homogeneous across agents, and $\alpha>2$. Let $\uast$ be defined as in \eqref{eqn:u_ast}, $\xeq$ as in \eqref{eqn:consensus_eq}, and $\xeq_\mathrm{s},\xeq_\mathrm{us}$ be the consensus equilibrium points of the system~\eqref{eqn:model} which arise at the fold bifurcation.
\begin{enumerate}[label=(\roman*)]
    \item If $u< \tilde{u}^\ast:= \frac{e^{1/2}}{\sqrt{2\alpha}}< \uast$, then the equilibrium point $\xeq = u \1$ of \eqref{eqn:model} is unique and globally exponentially stable. 
    \item If $u>\uast$, then the equilibrium points $\xeq = u \1$ and $\xeq_\mathrm{s}$ are locally asymptotically stable, while $\xeq_\mathrm{us} $ is unstable.
\end{enumerate}
\end{theorem}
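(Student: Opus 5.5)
For part~(i), I would treat the right-hand side $F(x):=aWx+(1-a)\psi(x,u\1)$ of \eqref{eqn:model} as a self-map of the compact set $\mathcal{X}^n$; that it maps $\mathcal{X}^n$ into itself is the well-posedness lemma. The plan is to show $F$ is a contraction in $\|\cdot\|_2$.

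\begin{itemize}
\item \emph{Lipschitz constant of $\psi$.} Since $\psi$ acts componentwise,
\[
\Big|\pde{\psi}{\xi}(\xi,u)\Big| = 2\alpha u\,|u-\xi|\,e^{-\alpha(u-\xi)^2}.
\]
Maximizing $s\mapsto 2\alpha s e^{-\alpha s^2}$ over $s\ge 0$ gives the maximizer $s=1/\sqrt{2\alpha}$. Hence
\[
L_\psi(u)=u\sqrt{2\alpha}\,e^{-1/2},
\]
and $L_\psi(u)<1$ exactly when $u<\tilde u^\ast$.
\item \emph{Contraction.} $W$ is symmetric and row-stochastic, so $\|W\|_2=1$. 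Therefore
\[
\|F(x)-F(y)\|_2\le \big(a+(1-a)L_\psi(u)\big)\|x-y\|_2,
\]
with factor strictly less than $1$.
\item \emph{Conclusion.} Banach's fixed-point theorem gives a unique equilibrium in $\mathcal{X}^n$ and global exponential convergence to it. By Theorem~\ref{thm:existence}(i) this equilibrium is $u\1$.
\item \emph{The inequality $\tilde u^\ast<\uast$.} Theorem~\ref{thm:existence} gives $\uast>\sqrt{2/\alpha}$. Moreover $\tilde u^\ast=e^{1/2}/\sqrt{2\alpha}<\sqrt{2/\alpha}$, because $e^{1/2}<2$.
\end{itemize}

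For part~(ii), I would linearize at a consensus equilibrium $\xi\1$. Since all $\psi_i$ coincide, the Jacobian is
\[
J=aW+(1-a)c\,I,\qquad c:=\pde{\psi}{\xi}(\xi,u),
\]
with eigenvalues $a\lambda_k(W)+(1-a)c$, where $\lambda_k\in[-1,1]$ and $\lambda_n=1$.
\begin{itemize}
\item If $|c|<1$, every eigenvalue has modulus at most $a+(1-a)|c|<1$, so $J$ is Schur stable and the equilibrium is locally asymptotically stable.
\item If $c>1$, the eigenvalue for $\lambda_n=1$ is $a+(1-a)c>1$, so the equilibrium is unstable.
\end{itemize}
It therefore suffices to locate $c$ at each of the three consensus equilibria.
\begin{itemize}
\item At $\xi=u$ we have $c=0$, so $\xeq=u\1$ is stable.
\item At the other two equilibria, the fixed-point relation gives $c=2\alpha\xi(u-\xi)$. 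By the remark after Theorem~\ref{thm:existence}, each equilibrium lies in $[0,u]$, so $c\ge 0>-1$.
\end{itemize}

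To compare $c$ with $1$, I would use a sign argument on $g(\xi):=\psi(\xi,u)-\xi$ over $[0,u]$.
\begin{itemize}
\item $g(0)=ue^{-\alpha u^2}>0$, and $g'(u)=-1<0$.
\item For $u>\uast$ there are exactly three zeros $\xi_\mathrm{s}<\xi_\mathrm{us}<u$, and at most three by Theorem~\ref{thm:existence}. So $g$ changes sign $+\to-$ at $\xi_\mathrm{s}$, $-\to+$ at $\xi_\mathrm{us}$, and $+\to-$ at $u$.
\item This gives $g'(\xi_\mathrm{s})\le 0$ and $g'(\xi_\mathrm{us})\ge 0$, that is, $c_\mathrm{s}\le 1\le c_\mathrm{us}$.
\end{itemize}

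The main obstacle is making these inequalities strict, i.e.\ ruling out a tangency $g=g'=0$ for $u>\uast$. By the computation in the proof of Theorem~\ref{thm:existence}, a tangency forces $\xi=\xi_1(u)$ and $h(u):=u e^{-\alpha(u-\xi_1(u))^2}-\xi_1(u)=0$, so $u$ would be another root of \eqref{eqn:u_ast}. I would first try to show that $h$ is strictly monotone on $(\sqrt{2/\alpha},1]$ by differentiating and using the fold relation $2\alpha\xi_1(u-\xi_1)=1$ to simplify. Should that fail, I would fall back on a continuation argument: a simple root cannot change type as $u$ varies without a new fold, and along the branch one has $\partial_u\psi\neq 0$ (transversality). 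This would yield $c_\mathrm{s}<1<c_\mathrm{us}$, which completes~(ii).
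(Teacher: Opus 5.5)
Your route is the paper's: a contraction bound $a+(1-a)L_\psi(u)<1$ for (i), and for (ii) the Jacobian $aW+(1-a)cI$ at consensus points, whose top eigenvalue $a+(1-a)c$ is compared with $1$. You are more careful than the paper in two places.
\begin{itemize}
\item Your estimate via $\|W\|_2=1$ controls the full spectral norm. The paper only bounds $\lambda_n(J(x))$ from above, which by itself does not give a discrete-time contraction.
\item You check $c\ge 0$, so the bottom eigenvalue stays above $-1$. The paper only addresses this in the later remark on flip bifurcations.
\end{itemize}

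The gap is the step you flag yourself: the strict inequalities $c_\mathrm{s}<1<c_\mathrm{us}$. The paper merely asserts these ``by construction'', and neither of your plans establishes them.
\begin{itemize}
\item \emph{Plan A fails:} $h$ is not monotone in general. For $u$ well above $\sqrt{2/\alpha}$ the exponential term is negligible, so $h(u)\approx-\xi_1(u)$, and $\xi_1$ is decreasing in $u$. For instance, with $\alpha=50$ one gets $h(0.2)\approx 0.021$ but $h(0.5)\approx-0.0209<h(1)\approx-0.0101$.
\item \emph{Plan B is circular:} ``a simple root cannot change type without a new fold'' presupposes that no second fold occurs for $u>\uast$, which is exactly what must be shown. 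Transversality at one fold point says nothing about whether others exist.
\item \emph{Minor:} counting zeros does not by itself give the crossing pattern you state, since a zero could be a touching point. Only your weak inequalities follow.
\end{itemize}

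A clean fix is to pass to logarithms. On $(0,u]$, where all consensus equilibria lie, they are exactly the zeros of $G_u(\xi):=\ln\xi-\ln u+\alpha(u-\xi)^2$. Since $\xi G_u'(\xi)=1-2\alpha\xi(u-\xi)$, we have $G_u'>0$ on $(0,\xi_1)$, $G_u'<0$ on $(\xi_1,\xi_2)$, and $G_u'>0$ on $(\xi_2,u]$.
\begin{itemize}
\item Because $G_u(0^+)=-\infty$ and $G_u(u)=0$, there are at most three zeros. There are exactly three iff $G_u(\xi_1)>0$, in which case $\xi_\mathrm{s}\in(0,\xi_1)$ and $\xi_\mathrm{us}\in(\xi_1,\xi_2)$.
\item At an equilibrium, $c=2\alpha\xi(u-\xi)=1-\xi G_u'(\xi)$. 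The strict sign of $G_u'$ on these two intervals therefore gives $c_\mathrm{s}<1<c_\mathrm{us}$ directly, with no tangency analysis needed.
\item Set $\Gamma(u):=G_u(\xi_1(u))$. By the envelope theorem and $2\alpha\xi_1(u-\xi_1)=1$, $\Gamma'(u)=2\alpha(u-\xi_1)-1/u=1/\xi_1-1/u>0$.
\item $\Gamma(u)=0$ is \eqref{eqn:u_ast} in logarithmic form, so $\uast$ is its unique root, and three distinct equilibria exist for every $u>\uast$.
\end{itemize}
The last point also justifies your ``exactly three zeros'' premise, which the purely local fold theorem behind Theorem~\ref{thm:existence} does not supply.
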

\begin{proof}
(i) First, we show that $\tilde{u}^\ast < \uast$. This follows trivially from the fact that $\tilde{u}^\ast < \sqrt{2/\alpha}$ and that $\sqrt{2/\alpha} < \uast$.

Let $\Phi(x(t))= a W x(t) + (1-a)\psi(x(t),u)$, and observe that the equilibrium points of the system~\eqref{eqn:model} correspond to the fixed-points of $\Phi(x)$.
For a fixed constant value of $u$, the Jacobian of the system \eqref{eqn:model} is given by $J(x):=a W + (1-a) \Psi_x(x)$, where $\Psi_x(x): = \diag{\pde{\psi_1}{x_1}(x_1,u),\dots,\pde{\psi_n}{x_n}(x_n,u)}$.
Since $W$ is symmetric and $\Psi_x(x)$ is a diagonal matrix, then $J(x)$ is also symmetric with real  eigenvalues; moreover, by Weyl's theorem \cite[Thm 4.3.1]{hornMatrixAnalysis2012}, it holds that $\lambda_n(J(x)) \le a \lambda_n(W)+(1-a)\lambda_n(\Psi_x(x)) = a + (1-a) \max_{i} \pde{\psi_i}{x_i}(x_i,u) \le a + (1-a)L_\psi(u)$, where $L_\psi(u)$ is the Lipschitz constant of the nonlinear function $\psi$ (see Appendix~\ref{sec:propertiesNonlinearFunctions}).
Therefore, if $u$ is s.t. $a + (1-a)L_\psi(u)<1$ then, according to the Banach contraction theorem for discrete-time dynamics (see e.g. \cite[Ch.~3.4]{bulloContractionTheoryDynamical2026}), $\Phi$ is contracting and $\Phi$ has a unique, globally exponentially equilibrium point $\xeq$, which, as per Theorem~\ref{thm:existence}, must be $\xeq = u\1$. Since the Lipschitz constant of $\psi$ is given by $L_{\psi}(u)= \sqrt{2\alpha} u e^{-1/2}$ (see Appendix~\ref{sec:propertiesNonlinearFunctions}), then $\Phi$ is contracting iff $a + (1-a)\sqrt{2\alpha} u e^{-1/2} < 1$, that is, $u < \tilde{u}^\ast$.

(ii) Assume $u>\uast$. 
To analyze the local stability of $\xeq = u \1$, it is sufficient to observe that the Jacobian $J(\xeq)= a W$ is a Schur matrix for all $u\in \U_+$. Thus, $\xeq$ is locally asymptotically stable.

Studying the local stability of the equilibria $\xeq_{\mathrm{s}}, \xeq_{\mathrm{us}} \in \vspan{\1}$  reduces to examining the eigenvalues of the Jacobian $J(x) = a W + (1-a) \Psi_x(x)$. Since $W$ is symmetric and $\Psi_x(x)$ is diagonal, $J(x)$ is symmetric and therefore has real eigenvalues.
Moreover, in the consensus space all derivatives of the nonlinearities coincide; that is, $\Psi_x(x) = \pde{\psi}{\xi}(\xi,u) I$ when $x = \xi \1$ (with some abuse of notation). Hence, it holds that $\lambda_n(J(x)) = a \lambda_n(W)+(1-a)\pde{\psi}{\xi}(\xi,u) = a + (1-a) \pde{\psi}{\xi}(\xi,u)$. 
Let $\xeq_{\mathrm{s}} = \xi_\mathrm{s}^\ast \1$ and $\xeq_{\mathrm{us}} = \xi_\mathrm{us}^\ast \1$. By construction, $\pde{\psi}{\xi}(\xi_{\mathrm{s}}^\ast,u)<1$ and $\pde{\psi}{\xi}(\xi_{\mathrm{us}}^\ast,u)>1$.
Therefore, the largest eigenvalues of the Jacobian of the system~\eqref{eqn:model} at these equilibria satisfy:\vspace{-0.1cm}
\begin{align*}
     \lambda_n(J(\xeq_{\mathrm{s}}))&= a + (1-a) \pde{\psi}{\xi}(\xi_{\mathrm{s}}^\ast,u) < a + 1-a = 1
     \\
     \lambda_n(J(\xeq_{\mathrm{us}}))&= a + (1-a) \pde{\psi}{\xi}(\xi_{\mathrm{us}}^\ast,u) > a + 1-a = 1.
\end{align*}
Therefore, $\xeq_{\mathrm{s}}$ is locally asymptotically stable, while $\xeq_{\mathrm{us}}$ is unstable.
\end{proof}

\begin{remark}
    Observe that, for any equilibrium pair $(x,u)$ with $x \in \mathcal{X}^n$ and $u \in \U_+$, the smallest eigenvalue of the Jacobian of \eqref{eqn:model} is always greater than $-1$. 
    Consequently, no eigenvalue can cross $-1$, and hence no flip (period-doubling) bifurcation can occur \cite{kuznetsovElementsAppliedBifurcation2023}.
    A sketch of the proof is as follows. By Weyl's theorem \cite[Thm 4.3.1]{hornMatrixAnalysis2012}, it holds that $\lambda_1(J(x)) \ge a \lambda_1(W)+(1-a)\lambda_1(\Psi_x(x))$. 
    Since $W$ is row-stochastic, Perron-Frobenius arguments imply $\lambda_1(W)>-1$. Moreover, by Lemma~\ref{lem:location_eq}, any equilibrium satisfies $x_i\le u$ component-wise; by construction, this implies that $\pde{\psi_i}{x_i}(x_i,u)\ge0$ given $x_i\le u$ for all $i$. Hence, $\lambda_1(J(x)) \ge a \lambda_1(W) \ge -a > -1$.
\end{remark}

The theoretical results established in this section, illustrated by the bifurcation analysis shown Fig.~\ref{fig:bifurcation}, provide the following interpretation of time-invariant broadcast recommendations. 
When the recommendation value is low ($u<\uast$), at equilibrium, the distance to the target (i.e., $\|\xtarget-\xeq\|$) remains small. However, when the recommendation value is high ($u > \uast$), a nontrivial collective behavior emerges. If the initial conditions are sufficiently close to the target, the distance between the collective opinion at equilibrium (captured by the consensus equilibrium point $\xeq=u\1$) and the target decreases as the recommendation increases. Otherwise, for general initial conditions, the distance between the collective opinion (captured by the consensus equilibrium point $\xeq_\mathrm{s}$) and the target grows with the recommendation. As a result, maintaining a constant high broadcast recommendation can lead to worse outcomes than a strategy with a low recommendation value.

\section{Time-varying targeted recommendations}\label{sec:greedy}

After analyzing constant broadcast policies and their limitations, we now design time-varying targeted recommendations. We first derive a greedy optimal policy, which we show to be also globally optimal (Section~\ref{sec:greedy_targeting}). Then we extend the analysis to a robust policy for uncertain sensitivity $\alpha$ (Section~\ref{sec:robust-alpha}). 

\subsection{Greedy targeting}\label{sec:greedy_targeting}
The optimal targeting controller selects $u_i(t)$, $i=1,\dots,n$, such that the agents achieve convergence to the desired state $\xtarget=\1$ in the shortest time. As a first step of the analysis, we determine how to bring the agents closer to $\xtarget=\1$ within one timestep. We do this by maximizing the influence term described by the nonlinearity.
\begin{lemma}[Pointwise Optimal Control]\label{lem:pointwise-optimal}
For each $i=1,\dots,n$ and $x_i\in\mathcal{X}$, the function $u_i\mapsto \psi_i(x_i,u_i)$ over $u_i\in\U$ is maximized at
\begin{align}
\opt{u}_i(x_i) = \min\left\{1,\; \tfrac{x_i}{2} + \sqrt{\tfrac{x_i^2}{4} + \tfrac{1}{2\alpha}}\right\}.
\label{eqn:optimal-control}
\end{align}
\end{lemma}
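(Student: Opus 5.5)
Fix $i$ and $x_i\in\mathcal{X}$, and write $f(u):=\psi_i(x_i,u)=u\,e^{-\alpha(u-x_i)^2}$ on $u\in\U=[-1,1]$. The plan is a one-dimensional calculus argument: find the critical points of $f$, identify which one is the interior maximizer, and then account for the box constraint $u\le 1$.

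\emph{Step 1: reduction to $u>0$.} For $u\le 0$ we have $f(u)\le 0$. For $u>0$ we have $f(u)>0$. Hence the maximum over $\U$ is attained on $(0,1]$, and it suffices to maximize there.

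\emph{Step 2: critical points.} Differentiating gives
\begin{align*}
f'(u)=e^{-\alpha(u-x_i)^2}\bigl(1-2\alpha u(u-x_i)\bigr).
\end{align*}
Since the exponential factor is positive, the sign of $f'$ is the sign of the quadratic $q(u):=1-2\alpha u^2+2\alpha x_i u$. The quadratic $q$ is concave with $q(0)=1>0$, so it has exactly one positive root,
\begin{align*}
u^+=\tfrac{x_i}{2}+\sqrt{\tfrac{x_i^2}{4}+\tfrac{1}{2\alpha}}.
\end{align*}
The other root is negative because the product of the roots is $-1/(2\alpha)<0$. It follows that $f'>0$ on $(0,u^+)$ and $f'<0$ on $(u^+,\infty)$. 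Thus $f$ is unimodal on $(0,\infty)$ with its maximum at $u^+$.

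\emph{Step 3: box constraint.} If $u^+\le 1$, then $u^+$ is feasible and is the maximizer. If $u^+>1$, then $f$ is strictly increasing on $(0,1]$, so the maximizer is $u=1$. Combining the two cases gives $\opt{u}_i(x_i)=\min\{1,u^+\}$, which is exactly \eqref{eqn:optimal-control}. Uniqueness of the maximizer follows from the strict monotonicity established in Step 2.

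There is no real obstacle in this argument. The only point needing care is the sign analysis of $q$, namely confirming that exactly one root is positive and that $f'$ changes sign there from positive to negative. This is settled by $q(0)=1>0$ together with the concavity of $q$.
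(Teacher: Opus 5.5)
Your proof is correct and follows essentially the same route as the paper. Both set $\partial\psi_i/\partial u_i=0$, solve the same quadratic $2\alpha u^2-2\alpha x_i u-1=0$, keep the positive root, and clip at $u=1$. The only cosmetic difference is how the maximizer is certified: you discard $u\le 0$ and then read off the sign of the concave quadratic $q$, whereas the paper compares the values at the two critical points and uses the decay of $\psi_i(x_i,\cdot)$ as $u\to\pm\infty$. Your version makes the monotonicity on $(0,u^+)$, which the clipping step needs, slightly more explicit.
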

\begin{proof}
Setting $\frac{\partial \psi_i}{\partial u_i} = \exp(-\alpha(u_i-x_i)^2)(1 - 2\alpha u_i(u_i-x_i)) = 0$ yields $2\alpha u_i^2 - 2\alpha x_i u_i - 1 = 0$ with roots $u_{i,1} = \frac{x_i}{2} -\sqrt{\frac{x_i^2}{4} + \frac{1}{2\alpha}}$ and $u_{i,2} = \frac{x_i}{2} +\sqrt{\frac{x_i^2}{4} + \frac{1}{2\alpha}}$.
Since $u_{i,1} < 0 < u_{i,2}$ and $\psi_i(x_i, u_{i,2}) > 0 > \psi_i(x_i, u_{i,1})$, the unconstrained maximum is at $u_{i,2}$. From $\psi(x_i,u)\to 0$ for $x_i$ fixed and $u\to\pm\infty$, we see that $u_{i,2} > 0 > -1$ is the global maximum and $\psi_i(x_i,\cdot)$ is increasing on $(u_{i,1}, u_{i,2})$ between the extrema. Thus, constraining to $\U$ gives~\eqref{eqn:optimal-control}.
\end{proof}

With the resulting optimal input \eqref{eqn:optimal-control}, we define the influence from the optimal recommendation $\opt{\psi}_i(x_i) := \psi_i(x_i, \opt{u}_i(x_i))$ and show that it is monotonically increasing.

\begin{lemma}[Monotonicity of Optimal Response]\label{lem:monotone-response}
Each function $\opt{\psi}_i:\mathcal{X}\to[0,1]$, $i=1,\dots,n$, $\opt{\psi}_i(x_i) := \psi_i(x_i, \opt{u}_i(x_i))$ is strictly increasing.
\end{lemma}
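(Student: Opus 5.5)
The plan is to apply the envelope principle to the value function $\opt{\psi}_i(x_i)=\max_{u_i\in\U}\psi_i(x_i,u_i)$ and to split $\mathcal{X}$ into the two regimes of Lemma~\ref{lem:pointwise-optimal}: where the interior maximizer $u_{i,2}(x_i)=\tfrac{x_i}{2}+\sqrt{\tfrac{x_i^2}{4}+\tfrac{1}{2\alpha}}$ lies in $\U$, and where it is clipped to $1$.

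First, note that $u_{i,2}(x_i)$ is continuous and strictly increasing in $x_i$. Its derivative is $\tfrac12+\tfrac{x_i/4}{\sqrt{x_i^2/4+1/(2\alpha)}}>\tfrac12-\tfrac12=0$, because $\abs{x_i/4}<\tfrac12\sqrt{x_i^2/4+1/(2\alpha)}$. Hence there is a unique threshold $\bar x$ (possibly outside $[-1,1]$) with $u_{i,2}(\bar x)=1$. Solving $2\alpha u^2-2\alpha x u-1=0$ at $u=1$ gives $\bar x = 1-\tfrac{1}{2\alpha}$. Thus $\opt{u}_i=u_{i,2}(x_i)$ for $x_i\le \bar x$, and $\opt{u}_i=1$ for $x_i\ge\bar x$. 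The map $\opt{\psi}_i$ is then continuous, being a composition of continuous functions glued at $\bar x$ where both pieces agree.

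Second, on each piece I show that the derivative is strictly positive.

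On the clipped piece, $\opt{\psi}_i(x_i)=e^{-\alpha(1-x_i)^2}$. Its derivative is $2\alpha(1-x_i)e^{-\alpha(1-x_i)^2}$, which is positive for $x_i<1$ and vanishes only at the endpoint $x_i=1$. That does not break strict monotonicity.

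On the interior piece, I use the envelope theorem, since $\partial_u\psi_i=0$ at $u=u_{i,2}$. This gives
\[
\tfrac{d}{dx_i}\opt{\psi}_i=\partial_{x_i}\psi_i(x_i,u_{i,2}) = 2\alpha u_{i,2}(u_{i,2}-x_i)e^{-\alpha(u_{i,2}-x_i)^2}.
\]
The first-order condition $2\alpha u_{i,2}(u_{i,2}-x_i)=1$ simplifies this to $e^{-\alpha(u_{i,2}-x_i)^2}>0$. Equivalently, one can note that $u_{i,2}>0$ and $u_{i,2}-x_i=\tfrac{1}{2\alpha u_{i,2}}>0$. Either way, the derivative is strictly positive.

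Finally, a continuous function with positive derivative on each of finitely many intervals, except possibly at isolated points, is strictly increasing on the union. This conclusion could also be reached without derivatives by a direct comparison: for $x<y$, we have $\opt{\psi}_i(y)\ge\psi_i(y,\opt{u}_i(x))>\psi_i(x,\opt{u}_i(x))$. The strict inequality holds because $\psi_i(\cdot,u)$ is strictly increasing in $x$ for $x<u$ whenever $u>0$, and $x<\opt{u}_i(x)$ always holds. This cleaner route avoids differentiability at $\bar x$ altogether.

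I will present that comparison argument as the main proof. The main obstacle is verifying $x<\opt{u}_i(x)$ together with $\opt{u}_i(x)>0$ in both regimes. In the interior regime this follows from $u_{i,2}-x=1/(2\alpha u_{i,2})>0$. In the clipped regime, $\opt{u}_i=1$ and $x<1$; the case $x=1$ only arises as the right endpoint $y$, so it is not needed. I also need to check that, when $y\ge u$, monotonicity still holds in the comparison. I will handle this by noting that $y\le \opt{u}_i(x)$ is not required: since $\psi_i(y,u)$ with $u=\opt{u}_i(x)$ may decrease past $u$, I instead restrict to $y\le\opt{u}_i(x)$ and chain short steps, or else fall back on the piecewise derivative argument above.

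Both conclusions use $\opt{\psi}_i\ge\psi_i(x_i,1)\ge0$ and $\opt{\psi}_i\le1$, which give the codomain $[0,1]$.
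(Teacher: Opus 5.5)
Your proposal is correct. Your piecewise-derivative argument is exactly the paper's proof. The paper splits at the point where the constraint $u_i\le 1$ activates. On the interior piece it uses the first-order condition $2\alpha u_{i,2}(u_{i,2}-x_i)=1$ (the envelope step) to get derivative $e^{-\alpha(u_{i,2}-x_i)^2}>0$. On the clipped piece it differentiates $e^{-\alpha(1-x_i)^2}$. You are slightly more careful than the paper, which simply posits a threshold $x'$. You show that $u_{i,2}$ is increasing, so the switch happens at a single point $\bar x=1-\tfrac{1}{2\alpha}$, and you check continuity at the junction before gluing.

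The comparison argument you choose as your main proof is a genuinely different route. It uses only three facts: $\opt{\psi}_i$ is a maximum over the fixed set $\U$, $\opt{u}_i(x)>0$, and $x<\opt{u}_i(x)$ for $x<1$. It never differentiates the value function, which is its advantage. Its weakness is the one you spotted. The one-step inequality $\psi_i(y,\opt{u}_i(x))>\psi_i(x,\opt{u}_i(x))$ holds exactly for $x<y<2\opt{u}_i(x)-x$. For example, with $\alpha=10$ and $x=0$ you get $\opt{u}_i(0)=\sqrt{1/20}$, and the inequality fails for $y\ge 2\sqrt{1/20}\approx 0.447$. So you should delete the sentence claiming it for all $x<y$, and also the self-contradictory remark that ``$y\le\opt{u}_i(x)$ is not required''.

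For the chaining to be a proof, you must say why it terminates. Take $x_{k+1}=\min\{y,\opt{u}_i(x_k)\}$.
\begin{itemize}
\item In the interior regime, $\opt{u}_i(x_k)-x_k=1/(2\alpha u_{i,2}(x_k))\ge 1/(2\alpha)$, because $u_{i,2}\le 1$ there.
\item In the clipped regime, $\opt{u}_i(x_k)=1\ge y$.
\end{itemize}
Hence $y$ is reached in finitely many steps.

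Alternatively, you can avoid chaining entirely by comparing with a shifted recommendation. Let $d=\opt{u}_i(x)-x>0$ and $u'=\min\{1,\opt{u}_i(x)+y-x\}$. Then one of two cases holds:
\begin{itemize}
\item $u'-y=d$ and $u'>\opt{u}_i(x)$, so $\psi_i(y,u')=u'e^{-\alpha d^2}>\opt{u}_i(x)e^{-\alpha d^2}$;
\item $u'=1$ and $0\le 1-y<d$, so $\psi_i(y,1)=e^{-\alpha(1-y)^2}>e^{-\alpha d^2}\ge\opt{u}_i(x)e^{-\alpha d^2}$.
\end{itemize}
Either way $\opt{\psi}_i(y)\ge\psi_i(y,u')>\opt{\psi}_i(x)$ in a single step, with no derivatives and no regime split.
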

\begin{proof}
Let $x'$ be the threshold where the constraint activates ($\opt{u}_i(x')=1$). In the unconstrained region ($x_i < x'$), the optimality condition $1 - 2\alpha u_{i,1}(u_{i,2} - x_i) = 0$ gives $\frac{d\opt{\psi}_i}{dx_i} = \exp(-\alpha(x_i - u_{i,2})^2) > 0$, with $u_{i,2}$ from Lemma~\ref{lem:pointwise-optimal}. In the constrained region ($x_i > x'$), $\opt{\psi}_i(x_i) = \exp(-\alpha(x_i-1)^2)$, hence $\frac{d\opt{\psi}_i}{dx_i} = 2\alpha(1-x_i)\exp(-\alpha(x_i-1)^2) > 0$ for $x_i < 1$.
\end{proof}

\begin{remark}
    When deriving an optimal controller, we usually also consider the \emph{control effort} through a cost of the input.
    Since here we interpret the control input as a recommendation, e.g., which content to show in a recommender system, there is no motivation for a cost on the input when deriving the optimal controller.
\end{remark}

Now we analyze what happens when we apply the greedy one-step policy \eqref{eqn:optimal-control} repeatedly. We show that the greedy policy is optimal for any monotone objective function $c$.

\begin{theorem}[Greedy Optimality for Monotone Objectives]\label{thm:greedy-monotone}
Let the cost $c:\R^n\to\R$ be monotonically increasing (i.e., $x \leq y$ component-wise $\Rightarrow c(x) \leq c(y)$). Then, the greedy policy $u(t) = \opt{u}(x(t))$ defined in \eqref{eqn:optimal-control} solves $\max_{u(0),\ldots,u(T-1)} c(x(T))$ subject to~\eqref{eqn:model}.
\end{theorem}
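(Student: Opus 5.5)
The plan is a comparison (monotonicity) argument on the closed-loop map. Define the greedy closed-loop map $G(x) := aWx + (1-a)\opt{\psi}(x)$, where $\opt{\psi}(x) = [\opt{\psi}_1(x_1)\,\cdots\,\opt{\psi}_n(x_n)]^T$. For any admissible input sequence $u(0),\dots,u(T-1)$, let $x(t)$ denote the resulting trajectory. Let $y(t)$ denote the greedy trajectory $y(t+1) = G(y(t))$, started from the same initial condition $y(0)=x(0)$.

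The claim I would establish by induction on $t$ is that $x(t) \le y(t)$ component-wise for all $t=0,\dots,T$. Once this holds at $t=T$, the monotonicity of $c$ gives $c(x(T)) \le c(y(T))$. Since the input sequence was arbitrary and the greedy inputs are themselves admissible, the greedy policy attains the maximum.

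The induction step uses two ingredients.

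\emph{Pointwise optimality.} By Lemma~\ref{lem:pointwise-optimal}, for every $i$ and every $u_i\in\U$, we have $\psi_i(x_i(t),u_i(t)) \le \opt{\psi}_i(x_i(t))$. Hence
\begin{align*}
x(t+1) \le aWx(t) + (1-a)\opt{\psi}(x(t)) = G(x(t)),
\end{align*}
where the inequality uses $1-a>0$.

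\emph{Order preservation of $G$.} If $x\le y$, then $G(x)\le G(y)$. The linear part preserves order because $W\ge 0$ entrywise and $a>0$, so $aWx \le aWy$. The nonlinear part preserves order because each $\opt{\psi}_i$ is strictly increasing by Lemma~\ref{lem:monotone-response} and acts only on its own coordinate, so $\opt{\psi}(x)\le\opt{\psi}(y)$.

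Chaining the two ingredients with the induction hypothesis $x(t)\le y(t)$ gives
\begin{align*}
x(t+1) \le G(x(t)) \le G(y(t)) = y(t+1),
\end{align*}
which closes the induction.

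The main point requiring care is the order-preservation step. Monotonicity of $\opt{\psi}_i$ must hold on the whole of $\mathcal{X}=[-1,1]$, including across the kink $x'$ where the constraint in \eqref{eqn:optimal-control} activates. This needs $\opt{\psi}_i$ to be continuous at $x'$, which holds since $\opt{u}_i$ is a continuous function of $x_i$ and $\psi_i$ is continuous. Combined with the positive derivatives on each piece given in Lemma~\ref{lem:monotone-response}, this yields global monotonicity. The endpoint $x_i=1$ is harmless, since nonstrict monotonicity suffices for the argument. Finally, the well-posedness lemma ensures both trajectories stay in $\mathcal{X}^n$, so the lemmas apply at every step.
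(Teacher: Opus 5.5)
Your proof is correct and is essentially the paper's argument: an induction showing that the greedy trajectory dominates any other trajectory component-wise, using $W\ge 0$, pointwise optimality (Lemma~\ref{lem:pointwise-optimal}) and monotonicity of $\opt{\psi}_i$ (Lemma~\ref{lem:monotone-response}), then concluding by monotonicity of $c$. Writing the step as $x(t+1)\le G(x(t))\le G(y(t))=y(t+1)$ and checking continuity of $\opt{\psi}_i$ at the kink where the input constraint activates only refine the presentation of the same proof.
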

\begin{proof}
We show by induction that the greedy trajectory $x'(t)$ dominates any other trajectory $x(t)$ component-wise for all $t$. The base case is immediate ($x'(0)=x(0)$). For the inductive step, assuming $x'(t) \geq x(t)$ and knowing that $w_{ij} \geq 0$ we have $Wx'(t) \geq Wx(t)$.

By Lemma~\ref{lem:pointwise-optimal}, $\opt{\psi}_i(x_i(t)) \geq \psi_i(x_i(t), u_i)$ for any $u_i \in \U$ and  by Lemma~\ref{lem:monotone-response}, $\opt{\psi}_i(x'_i(t)) \geq \opt{\psi}_i(x_i(t))$ since $x'(t) \geq x(t)$. Together this gives $\psi_i(x'_i(t), \opt{u}_i(x'_i(t))) \geq \psi_i(x_i(t), u_i(t))$.

Since each term in the convex combination $x'(t+1)$ dominates the corresponding in $x(t+1)$, we obtain $x'(t+1) \geq x(t+1)$. Monotonicity of $c$ then gives $c(x'(T)) \geq c(x(T))$.
\end{proof}

This result directly yields that the greedy policy \eqref{eqn:optimal-control} minimizes the distance to the objective $\xtarget=\1$.

\begin{corollary}[Greedy Optimality]\label{cor:distance-optimal}
The greedy policy~\eqref{eqn:optimal-control} minimizes $\|x(T) - \1\|_2^2$.
\end{corollary}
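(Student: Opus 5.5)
The plan is to reduce the corollary to Theorem~\ref{thm:greedy-monotone} by exhibiting a monotonically increasing cost whose maximization is equivalent to minimizing $\|x(T)-\1\|_2^2$ on the admissible state set. The natural choice is $c(x) := -\|x-\1\|_2^2 = -\sum_{i=1}^n (x_i-1)^2$. This $c$ is not monotone on all of $\R^n$, but it is monotone on the relevant domain. By the well-posedness lemma, every trajectory of \eqref{eqn:model} with $u(t)\in\U^n$ and $x(0)\in\mathcal{X}^n$ stays in $\mathcal{X}^n=[-1,1]^n$. On this set each summand $-(x_i-1)^2$ is nondecreasing in $x_i$, because $x_i\le 1$. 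Hence $x\le y$ component-wise with $x,y\in\mathcal{X}^n$ implies $c(x)\le c(y)$.

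Two routes then finish the argument. The first uses the proof of Theorem~\ref{thm:greedy-monotone} directly. That proof shows the stronger statement that the greedy trajectory $x'(t)$ dominates any admissible trajectory $x(t)$ component-wise for every $t$, and it uses monotonicity of $c$ only in the final step, to compare $x'(T)\ge x(T)$. Both points lie in $\mathcal{X}^n$, so monotonicity of $c$ on $\mathcal{X}^n$ suffices, and we get $-\|x'(T)-\1\|_2^2\ge -\|x(T)-\1\|_2^2$. The second route is cleaner if one insists on a cost defined on all of $\R^n$: replace $c$ by $\tilde c(x):=-\sum_i (\min\{x_i,1\}-1)^2$. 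This function is monotone on $\R^n$ and coincides with $c$ on $\mathcal{X}^n$, so Theorem~\ref{thm:greedy-monotone} applies verbatim.

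The only step that needs care is the domain issue: $-\|x-\1\|_2^2$ is not globally monotone, since it decreases once $x_i>1$. This is exactly why the well-posedness lemma, which guarantees $x_i(T)\le 1$, must be invoked, or the truncated cost $\tilde c$ used instead. Beyond that, the corollary follows immediately, since maximizing $c(x(T))$ is the same as minimizing $\|x(T)-\1\|_2^2$.
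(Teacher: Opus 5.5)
Your proof is correct and follows the paper's argument: the paper also takes $c(x) = -\|x-\1\|_2^2$, observes that it is monotone on $\mathcal{X}^n$ because $\partial c/\partial x_i = 2(1-x_i)\ge 0$, and then invokes Theorem~\ref{thm:greedy-monotone}. Your explicit handling of the domain mismatch is a rigorous version of a step the paper leaves implicit, since the theorem is stated for costs monotone on all of $\R^n$; you resolve it either through invariance of $\mathcal{X}^n$ from the well-posedness lemma or through the truncated cost $\tilde c$.
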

\begin{proof}
$c(x) = -\|x - \1\|_2^2 = -\sum_i (x_i - 1)^2$ is monotone increasing on $\mathcal{X}^n$ since $\frac{\partial}{\partial x_i}(-(x_i-1)^2) = 2(1-x_i) \geq 0$. The result follows directly from Theorem~\ref{thm:greedy-monotone}.
\end{proof}

As a last step in the analysis of the targeted policies, the following theorem establishes convergence to $\xtarget=\1$ from any initial condition $x\in\mathcal{X}^n$.

\begin{theorem}[Convergence to Target]\label{thm:convergence}
Under the greedy policy \eqref{eqn:optimal-control}, $x(t) \to \xtarget=\1$ for $t\to\infty$ from any $x(0)\in\mathcal{X}^n$.
\end{theorem}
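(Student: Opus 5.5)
We prove convergence with a Lyapunov-type argument based on the minimum opinion, $m(t) := \min_i x_i(t)$. Under the greedy policy the closed loop is $x(t+1) = aWx(t) + (1-a)\opt{\psi}(x(t))$, where $\opt{\psi}(x)=[\opt{\psi}_1(x_1)\,\cdots\,\opt{\psi}_n(x_n)]^T$. By Assumption~\ref{ass:alpha_homogeneous} all $\opt{\psi}_i$ coincide with a single scalar map $\opt{\psi}$. This map is strictly increasing by Lemma~\ref{lem:monotone-response} and satisfies $\opt{\psi}(1)=1$, since $\opt{u}_i(1)=1$ and $\psi_i(1,1)=1$.

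\textbf{Step 1: $\opt{\psi}(s)>s$ for every $s\in[-1,1)$.} We split into two cases.
\begin{itemize}
\item If $s\ge 0$, take $u=1$. Then $\opt{\psi}(s)\ge \exp(-\alpha(1-s)^2)$, and we must check $\exp(-\alpha(1-s)^2)>s$. This comparison is not automatic for large $\alpha$, so we instead take $u = \min\{1, s+\delta\}$ for small $\delta>0$. This gives $\psi(s,s+\delta) = (s+\delta)e^{-\alpha\delta^2}$, which exceeds $s$ whenever $\delta e^{-\alpha \delta^2} > s(1-e^{-\alpha\delta^2})\approx s\alpha\delta^2$. That holds for $\delta$ small enough when $s<1$. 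At $s$ near $1$ the cap $u\le 1$ binds, and there we use $\psi(s,1)=e^{-\alpha(1-s)^2}=1-\alpha(1-s)^2+o((1-s)^2)>s$. To cover the whole range cleanly, it is easiest to argue via the pointwise maximum: if $\opt{u}(s)>s$ is feasible, then $\partial_u\psi(s,u)|_{u=s}=1>0$, so $\opt{\psi}(s)>\psi(s,s)=s$. At $s<1$ the choice $u=s$ is interior, hence strictly improvable.
\item If $s<0$, take $u=0^+$, or simply note that $\opt{\psi}(s)\ge \psi(s,u)>0>s$ for any small $u>0$.
\end{itemize}
Either way, $\opt{\psi}(s)>s$ on $[-1,1)$. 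This is the key fact, and we can phrase it entirely through $\partial_u\psi(s,s)=1$, which is a neat one-line argument.

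\textbf{Step 2: monotone increase of $m(t)$.} For each $i$, $W$ row-stochastic gives $(Wx)_i\ge m$, and monotonicity of $\opt{\psi}$ gives $\opt{\psi}(x_i)\ge\opt{\psi}(m)$. Hence
\[
m(t+1)\ge a\,m(t)+(1-a)\,\opt{\psi}(m(t)) =: g(m(t)),
\]
where $g(s)>s$ for $s<1$ and $g(1)=1$. Since also $x(t)\le\1$ by well-posedness, $m(t)$ is nondecreasing and bounded by $1$, so it converges to some limit $m^\infty\le 1$. The map $g$ is continuous (both $\opt{u}$ and $\psi$ are continuous), so passing to the limit in $m(t+1)\ge g(m(t))$ gives $m^\infty\ge g(m^\infty)$. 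This forces $m^\infty=1$, because $g(s)>s$ whenever $s<1$.

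\textbf{Step 3: conclusion.} Since $m(t)\le x_i(t)\le 1$ for every $i$, we get $x(t)\to\1$. The main obstacle is making Step 1 rigorous at the boundary. The cleanest route is the derivative argument $\partial_u\psi(s,u)|_{u=s}=1-0=1>0$ together with feasibility of $u$ slightly above $s$, which holds because $s<1$; this avoids any case analysis on $\alpha$. Continuity of $\opt{\psi}$ should also be verified explicitly from \eqref{eqn:optimal-control}. The argument gives only asymptotic convergence, with no rate, which is all the theorem asks for.
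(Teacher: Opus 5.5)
Your proof is correct and follows essentially the same route as the paper. Both establish $\opt{\psi}(s)>s$ on $[-1,1)$ because $u=s$ is not a maximizer, show that $m(t)=\min_i x_i(t)$ satisfies $m(t+1)\ge a\,m(t)+(1-a)\opt{\psi}(m(t))$, and pass to the limit using continuity; your argument $\partial_u\psi(s,u)|_{u=s}=1>0$, with $u$ slightly above $s$ still feasible, is a cleaner version of the paper's ``$u=x_i$ violates the optimality condition'' that also covers the boundary point $s=-1$, and the exploratory case analysis at the start of your Step 1 can simply be dropped.
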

\begin{proof}
    Since $\psi_i(x_i, x_i) = x_i$ and $u = x_i$ does not meet the optimality condition $1 - 2\alpha u(u - x_i) = 0$, the maximizer must satisfy $\opt{u}_i(x_i) \neq x_i$ (for $x_i\neq1$). We obtain the inequality
    \begin{equation}
        \opt{\psi}_i(x_i) > x_i\text{ for all }x_i\in [-1,1).
        \label{eqn:psigeqx}
    \end{equation}
    Define $m(t) := \min_i x_i(t)$. Since $W$ is row-stochastic, $\sum_j w_{ij} x_j(t) \geq m(t)$. Combined with \eqref{eqn:psigeqx}, we obtain $x_i(t+1) \geq a\, m(t) + (1-a)\opt{\psi}(x_i(t)) > a\, m(t) + (1-a) m(t) = m(t)$ whenever $x_i(t) < 1$. Since this holds for all $i$ (and trivially if $x_i(t)=1$), we obtain, for  $m(t) < 1$, that
    \begin{equation}
        m(t+1) \geq a\, m(t) + (1-a)\opt{\psi}(m(t)) > m(t).
        \label{eqn:m_inequality}
    \end{equation}
    
    Since $\{m(t)\}$ is strictly increasing and bounded above by~1, it converges to some $m^* \leq 1$. Taking the limit in \eqref{eqn:m_inequality} and using the continuity of $\opt{\psi}$ yields $m^* \geq a\, m^* + (1-a)\opt{\psi}(m^*)$, i.e., $m^* \geq \opt{\psi}(m^*)$. By \eqref{eqn:psigeqx}, this is impossible unless $m^* = 1$. Hence $\min_i x_i(t) \to 1$ as $t\to\infty$, which implies $x(t) \to \1=\xtarget$ as $t\to\infty$.
\end{proof}

In summary, we have shown that for the time-varying targeting case, a greedy strategy is optimal and yields convergence of the trajectories of \eqref{eqn:model} to the desired state $\xtarget$. Since this policy only depends on the state of the individual agent, it would be possible to implement these personal recommendations in a distributed way, i.e., everyone that can observe the state of an individual $x_i$ and knows the sensitivity $\alpha$ can give optimal recommendations. This means that the recommendations can be given by distributed actors that only need to communicate global parameters (here $\alpha$) rather than a centralized instance with full knowledge of the system. In the next section, we will provide robust recommendations for the case of uncertain sensitivity.

\begin{figure}[ht]\centering
    \subfloat[]{%
        \begin{minipage}{0.98\linewidth}
            \pgfplotsset{width=\linewidth}%
            \begin{tikzpicture}
\begin{axis}[
  width=\linewidth,
  height=5cm,
  xlabel={$t$},
  ylabel={$\bar{x}(t)$},
  xmin=0, xmax=50,
  ymin=0, ymax=1.18,
  legend style={
    at={(0.5,1.02)},
    anchor=south,
    legend columns=-1,
    font=\small,
    column sep=0.8em,
  },
  grid=major,
  grid style={line width=0.2pt, draw=gray!30},
  tick label style={font=\small},
  label style={font=\small},
]

%% Target line
\addplot[forget plot, black, dotted, line width=0.9pt, opacity=0.5, domain=0:50, samples=2] {1};

%% Broadcast u=0.5 — mean
\addplot[kth-blue, line width=2pt]
  table[x=t, y=mean, col sep=comma]{results/exp1_bc05.csv};
\addlegendentry{$u=0.5$}

%% Broadcast u=0.9 (low IC) — mean
\addplot[kth-lightred, line width=2pt]
  table[x=t, y=mean, col sep=comma]{results/exp1_bc09.csv};
\addlegendentry{$u=0.9$}

%% Greedy optimal — mean
\addplot[kth-green, line width=2pt]
  table[x=t, y=mean, col sep=comma]{results/exp1_opt.csv};
\addlegendentry{$\opt{u}$}

%% Broadcast u=0.9 (high IC) — mean
\addplot[forget plot, kth-lightred, line width=2pt, dashed]
  table[x=t, y=mean, col sep=comma]{results/exp1_bc09_high.csv};

\end{axis}
\end{tikzpicture}
        \end{minipage}
        \label{fig:results_combined-a}}
    \\ \vspace{-.5em}
    \subfloat[]{%
        \begin{minipage}{0.98\linewidth}
            \pgfplotsset{width=\linewidth}%
            \begin{tikzpicture}
\begin{axis}[
  width=\linewidth,
  height=5cm,
  xlabel={$t$},
  ylabel={$\bar{x}(t)$},
  xmin=0, xmax=50,
  ymin=0, ymax=1.18,
  legend style={
    at={(0.5,1.02)},
    anchor=south,
    legend columns=-1,
    font=\small,
  },
  grid=major,
  grid style={line width=0.2pt, draw=gray!30},
  tick label style={font=\small},
  label style={font=\small},
]

%% alpha_lo
\addplot[kth-lightred, line width=2pt]
  table[x=t, y=lo_mean, col sep=comma]{results/exp2.csv};
\addlegendentry{$\hat\alpha=\alpha^-$}

%% alpha_true
\addplot[kth-green, line width=2pt]
  table[x=t, y=true_mean, col sep=comma]{results/exp2.csv};
\addlegendentry{$\hat\alpha=\alpha_{\mathrm{true}}$}

%% alpha_hi (robust)
\addplot[kth-blue, line width=2pt]
  table[x=t, y=hi_mean, col sep=comma]{results/exp2.csv};
\addlegendentry{$\hat\alpha=\alpha^+$}

%% Target
\addplot[black, dotted, line width=1pt, forget plot]
  coordinates {(0,1)(50,1)};
\addlegendentry{$x^*=1$}

\end{axis}
\end{tikzpicture}
        \end{minipage}
        \label{fig:results_combined-b}}
    \caption{Mean state $\bar{x}(t)=1/n\sum_{i=1}^{n}x_i(t)$ for different closed-loop trajectories. The controller should steer the state towards $\xtarget=\1$.  (a) Comparison of the homogeneous broadcast (blue and red trajectories) with the greedy policy \eqref{eqn:optimal-control} (green trajectory). (b) Robust version of the greedy policy for $\alpha_{\mathrm{true}}\in[\alpha^-,\alpha^+]=[3,10]$ and $\alpha_{\mathrm{true}}=7$.}\vspace{-0.3cm}
    \label{fig:results_combined}
\end{figure}

\subsection{Robust greedy policy}\label{sec:robust-alpha}

The sensitivity $\alpha$ describes how large the acceptance window for recommendations is: a smaller $\alpha$ leads to a larger acceptance window and thus to a more aggressive optimal recommendation $\opt{u}$. Suppose $\alpha$ is only known to lie in an interval $[\alpha^-,\alpha^+]$, with $0<\alpha^-\leq\alpha^+$. We show that designing the greedy controller with $\hat\alpha=\alpha^+$ guarantees convergence for any true value of the sensitivity $\alpha_{\mathrm{true}}\in[\alpha^-,\alpha^+]$.

\begin{theorem}[Robust Convergence]\label{thm:robust-convergence}
Let $\alpha_{\mathrm{true}}\in[\alpha^-,\alpha^+]$ be unknown. Under the greedy controller designed with $\hat\alpha=\alpha^+$, i.e.,\vspace{-0.1cm}
\begin{align}
    \opt{u}_i(x_i) = \min\!\left\{1,\; \frac{x_i}{2} + \sqrt{\frac{x_i^2}{4} + \frac{1}{2\alpha^+}}\right\}, \label{eqn:robust-control}
\end{align}
the system~\eqref{eqn:model} converges to $\xtarget=\1$ from any $x(0)\in\mathcal{X}^n$.
\end{theorem}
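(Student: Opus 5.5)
The plan is to reuse the argument of Theorem~\ref{thm:convergence}. That proof used only two facts about the closed-loop influence $\opt{\psi}_i(x_i)=\psi_i(x_i,\opt{u}_i(x_i))$: it is continuous, and it satisfies the strict inequality \eqref{eqn:psigeqx}, i.e., $\opt{\psi}_i(x_i)>x_i$ for all $x_i\in[-1,1)$. So it suffices to show that, with the recommendation \eqref{eqn:robust-control} computed with $\hat\alpha=\alpha^+$ and the response evaluated with the true sensitivity, the map
\begin{align*}
\tilde\psi(x_i):=\hat u(x_i)\exp\!\big(-\alpha_{\mathrm{true}}(\hat u(x_i)-x_i)^2\big)
\end{align*}
still satisfies $\tilde\psi(x_i)>x_i$ on $[-1,1)$. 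Here $\hat u$ denotes \eqref{eqn:robust-control}. Once this holds, the minimum $m(t)=\min_i x_i(t)$ is strictly increasing and bounded by $1$, exactly as in \eqref{eqn:m_inequality}. Its limit $m^\ast$ satisfies $m^\ast\ge\tilde\psi(m^\ast)$ by continuity, which forces $m^\ast=1$. Note that the monotonicity of Lemma~\ref{lem:monotone-response} is not needed for this step. The min-argument only uses $x_i(t+1)\ge a\,m(t)+(1-a)\tilde\psi(x_i(t))$ together with $\tilde\psi(x_i)>x_i\ge m(t)$.

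The key step is the inequality $\tilde\psi(x)>x$. First, $\hat u(x)\ge x$ always holds: the square root exceeds $|x|/2$, and $1\ge x$. Equality occurs only at $x=1$. Two cases follow.

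\emph{Case $x\le 0$.} Here $\hat u(x)>0$, so $\tilde\psi(x)>0\ge x$ unless $x=0$. At $x=0$ we still get $\tilde\psi(0)>0$.

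\emph{Case $x\in(0,1)$.} The function $g(u):=\psi(x,u)$, taken with the true $\alpha_{\mathrm{true}}$, satisfies $g(x)=x$. By the proof of Lemma~\ref{lem:pointwise-optimal}, $g$ is strictly increasing on $[x,\,u_{2}^{\mathrm{true}}(x)]$, where $u_{2}^{\mathrm{true}}(x)=\tfrac{x}{2}+\sqrt{\tfrac{x^2}{4}+\tfrac{1}{2\alpha_{\mathrm{true}}}}$ is the true unconstrained maximizer. Since $\alpha^+\ge\alpha_{\mathrm{true}}$, the expression under the root is smaller for $\alpha^+$. Hence
\begin{align*}
x<\hat u(x)\le \min\{1,u_{2}^{\mathrm{true}}(x)\}\le u_{2}^{\mathrm{true}}(x).
\end{align*}
This gives $\tilde\psi(x)=g(\hat u(x))>g(x)=x$. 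This is exactly where the worst-case choice $\hat\alpha=\alpha^+$ is used. Underestimating $\alpha$ could push $\hat u$ beyond the true peak, where $g$ may drop below $x$.

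The point needing most care is the monotonicity claim for $g$ on $[x,u_2^{\mathrm{true}}]$. I would verify it directly from $\partial g/\partial u=e^{-\alpha(u-x)^2}(1-2\alpha u(u-x))$. The quadratic $2\alpha u^2-2\alpha xu-1$ is negative between its roots $u_1<0<u_2$, and $x$ lies in $(u_1,u_2)$, since at $u=x$ the quadratic equals $-1$. So $\partial g/\partial u>0$ on $[x,u_2)$.

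Continuity of $\tilde\psi$ is immediate from continuity of $\hat u$, as a minimum of continuous functions. With both properties established, the argument of Theorem~\ref{thm:convergence} goes through verbatim and yields $x(t)\to\1$.
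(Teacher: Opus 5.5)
Your key inequality $\tilde\psi(x)>x$ on $[-1,1)$ is correct, but you reach it by a different route than the paper.

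The paper compares in the sensitivity variable. For fixed $\hat u>0$ the response is nonincreasing in $\alpha$, so $\psi^{\alpha_{\mathrm{true}}}(x,\hat u)\ge\psi^{\alpha^+}(x,\hat u)$. Optimality of $\hat u$ for the $\alpha^+$-problem together with $\hat u\neq x$ then gives $\psi^{\alpha^+}(x,\hat u)>\psi^{\alpha^+}(x,x)=x$. That takes two lines, with no case split and no need to locate the true maximizer.

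You instead compare in the recommendation variable, working only with the true response $g$. You show $x<\hat u(x)\le u_2^{\mathrm{true}}(x)$, so the robust recommendation stays on the increasing branch of the true response curve. This is longer. Your case $x\le 0$ is also superfluous: the quadratic equals $-1$ at $u=x$ for every $x$, so the branch argument already covers all of $[-1,1)$. But your route carries more information. It explains exactly why underestimating $\alpha$ can fail, and it is what the point below needs.

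The step that does not hold as you justify it is the limit argument. You assert that the monotonicity of Lemma~\ref{lem:monotone-response} is not needed. However, the inequality the paper passes to the limit in, \eqref{eqn:m_inequality}, namely $m(t+1)\ge a\,m(t)+(1-a)\opt{\psi}(m(t))$, uses $\opt{\psi}(x_i(t))\ge\opt{\psi}(m(t))$ for all $i$, which is monotonicity. From $x_i(t+1)\ge a\,m(t)+(1-a)\tilde\psi(x_i(t))$ and $\tilde\psi(x_i)>x_i\ge m(t)$ alone you only get $m(t+1)>m(t)$. A strictly increasing bounded sequence need not tend to $1$, and ``$m^\ast\ge\tilde\psi(m^\ast)$ by continuity'' does not follow.

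There are two easy repairs.
\begin{enumerate}
\item Prove that $\tilde\psi$ is nondecreasing; your ingredients give this directly. The map $\hat u$ is nondecreasing in $x$. Next, $\partial_u\psi^{\alpha_{\mathrm{true}}}(x,\hat u(x))\ge 0$ because $0<\hat u(x)\le u_2^{\mathrm{true}}(x)$. Also $\partial_x\psi^{\alpha_{\mathrm{true}}}(x,\hat u(x))=2\alpha_{\mathrm{true}}\hat u(\hat u-x)e^{-\alpha_{\mathrm{true}}(\hat u-x)^2}\ge 0$ because $\hat u>0$ and $\hat u\ge x$. So the continuous, piecewise $C^1$ map $\tilde\psi$ has nonnegative derivative.
\item Avoid monotonicity by compactness. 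Take a subsequence $x(t_k)\to\bar x$ and note $\bar x\ge m^\ast\1$. Passing to the limit in $m(t_k+1)\ge a\,m(t_k)+(1-a)\min_i\tilde\psi(x_i(t_k))$ gives $m^\ast\ge\min_i\tilde\psi(\bar x_i)$. If $m^\ast<1$, this contradicts $\tilde\psi(\bar x_i)>m^\ast$ for every $i$, which holds using $\tilde\psi(1)=1$.
\end{enumerate}

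The paper's own proof has the same omission. It only says it suffices to verify $\tilde\psi(x_i)>x_i$, without checking that the robust closed-loop response is monotone. Your branch argument is the natural way to close that gap.
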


\begin{proof}
Denote by $u^{\alpha}_i(x_i)$ ($\psi_i^{\alpha}$) be the optimal input (influence function)  calculated with $\alpha$. 
By the proof of Theorem~\ref{thm:convergence}, it suffices to verify $\psi_i^{\alpha_{\mathrm{true}}}(x_i,u^{\alpha^+}_i(x_i))>x_i$ for all $x_i\in[-1,1)$. Since $\alpha_{\mathrm{true}}\leq\alpha^+$,
\begin{align*}
    e^{-\alpha_{\mathrm{true}}(u^{\alpha^+}_i-x_i)^2} \geq e^{-\alpha^+(u^{\alpha^+}_i-x_i)^2}
\end{align*}
and with $u^{\alpha^+}_i>0$ we have $\psi_i^{\alpha_{\mathrm{true}}}(x_i,u^{\alpha^+}_i)\geq \psi_i^{\alpha^+}(x_i,u^{\alpha^+}_i)$. By Lemma~\ref{lem:pointwise-optimal}, $u^{\alpha^+}_i$ maximizes $\psi_i^{\alpha^+}(x_i,\cdot)$, and since $u^{\alpha^+}_i\neq x_i$ for $x_i<1$, we have $\psi_i^{\alpha^+}(x_i,u_i)> \psi_i^{\alpha^+}(x_i,x_i)=x_i$. The result follows from Theorem~\ref{thm:convergence}.
\end{proof}

The choice $\hat\alpha=\alpha^+$ assumes the narrowest acceptance window for recommendations and any $\alpha_{\mathrm{true}}<\alpha^+$ only makes agents more receptive to recommendations than assumed by the robust policy. Conversely, choosing $\hat\alpha<\alpha^+$ may fail because the controller recommends $u_i$ too far from $x_i$. The trade-off for guaranteeing robust convergence is slower convergence when $\alpha_{\mathrm{true}}\ll\alpha^+$, since the controller is more conservative than necessary.

\section{Numerical example}\label{sec:example}

The numerical example in this section illustrates the theoretical findings presented in Sections~\ref{sec:broadcast} and \ref{sec:greedy}. We consider a connected graph $\G(W)$ with $n=20$ agents, where each edge exists with probability $p=0.2$, and the edge weights are drawn from a uniform distribution and normalized such $W\1=\1$.
The parameters of the model \eqref{eqn:model} are $a=0.5$ and $x(0)$ uniformly drawn from $\mathcal{X}^n$. The goal is for the recommendations to steer the opinions towards $\xtarget=\1$. 

Fig.~\ref{fig:results_combined} shows two experiments performed on the same underlying social network. In Fig.~\ref{fig:results_combined-a} we compare the homogeneous broadcast to the greedy policy \eqref{eqn:optimal-control}. This example shows that the initial opinions have to be sufficiently biased for the broadcast recommendation to achieve convergence to $\xeq=u\1$ (as detailed in Section~\ref{sec:broadcast}). For smaller $u=0.5<\uast$ the system converges to $\xeq=u\1$. Due to the fold bifurcation behavior (see Theorems~\ref{thm:existence} and \ref{thm:stability}), a large $u$ (here $u=0.9>\uast$) can lead to different equilibrium points (compare the red and dashed red lines), depending on the initial condition: The agents can move in the opposite direction of what is recommended. In contrast, the greedy policy derived in Lemma~\ref{lem:pointwise-optimal} ensures convergence by giving optimal personal recommendations to each agent in the network. In Fig.~\ref{fig:results_combined-b}, we analyze the robust version of the greedy policy for an uncertain sensitivity $\alpha$ (Theorem~\ref{thm:robust-convergence}). The robust $\hat{\alpha}=\alpha^+$ leads to slower convergence than perfect estimation $\hat{\alpha}=\alpha_{\mathrm{true}}$. However, choosing $\hat{\alpha}$ too small (here $\hat{\alpha}=\alpha^-$) can prevent the system from converging to $\xtarget$.

\section{Conclusion}\label{sec:conclusion}
This work addressed the issue of personal recommendations in social networks. We proposed a nonlinear interconnected model characterized by a smooth nonlinear Gaussian influence function, consistent with effects from behavioral psychology, that describes how agents process recommendations.
We analyzed our model under broadcast and targeted policies. In the broadcast scenario, our bifurcation analysis shows that, depending on the initial opinions, it is not always possible to reach the desired target of the recommendations. We further designed optimal personal (targeted) recommendations and provided results for robust convergence when the agents' sensitivities are uncertain. Our model is analytically tractable and can serve as a foundation for analyzing personal recommendations.

In future work, we plan to extend the model to heterogeneous sensitivities $\alpha_i$ and weighting of the social term $a_i$, as well as realistic opinion-forming processes that capture, for instance, the agents' biases and stubbornness.

\appendices
\section{Properties of the nonlinear function}\label{sec:propertiesNonlinearFunctions}
We collect some properties of the nonlinearity $\psi_i=\psi_i(x_i,u_i)$ in \eqref{eqn:model}. The first and second derivatives are
    $\pde{\psi_i}{x_i}(x_i,u_i) = 2 \alpha u_i (u_i-x_i)  e^{-\alpha (u_i-x_i)^2}$
    and
    $\ppde{\psi_i}{x_i}(x_i,u_i)%
    = 2 \alpha u_i (- 1 + 2 \alpha (u_i-x_i)^2 ) e^{-\alpha (u_i-x_i)^2}$.
The Lipschitz constant is given by $L_{\psi}(u_i)= \sqrt{2\alpha}  e^{-1/2}\, u_i$.

Assume that $u_i$ is fixed, i.e., $u_i=\uast\in \U$. Then $\psi_i(x_i,\uast)$ has a maximum value equal to $\uast$ at $x_i=\uast$, i.e.,  $\psi_i(\uast,\uast)=\uast$ and $\pde{\psi_i}{x_i}(\uast,\uast)=0$. The inflection points are given by $x_i =  \uast \pm \frac{1}{\sqrt{2\alpha}}$, yielding 
$\psi_i(\uast\pm\tfrac{1}{\sqrt{2\alpha}},\uast)= \uast e^{-\frac{1}{2}}
$, and 
$\pde{\psi_i}{x_i}(\uast\pm \tfrac{1}{\sqrt{2\alpha}},\uast) =  \mp \sqrt{2\alpha} e^{-\frac{1}{2}}\,\uast $.

\bibliographystyle{IEEEtran}

\end{document}